\def\showauthornotes{0}
\newcommand{\defeq}{\stackrel{\textup{def}}{=}}
\newcommand{\nfrac}{\nicefrac}
\newcommand{\eps}{\epsilon}
\renewcommand{\epsilon}{\varepsilon}
\newcommand{\Authorcomment}[2]{{\sf \small\color{red}{[#1: #2]}}}
\def\showauthornotes{1} 
\def\showdraftbox{1}
\newtheorem{theorem}{Theorem}[section]
\newtheorem{definition}[theorem]{Definition}
\newtheorem{lemma}[theorem]{Lemma}
\def\FullBox{\hbox{\vrule width 6pt height 6pt depth 0pt}}
\def\qed{\ifmmode\qquad\FullBox\else{\unskip\nobreak\hfil
\penalty50\hskip1em\null\nobreak\hfil\FullBox
\parfillskip=0pt\finalhyphendemerits=0\endgraf}\fi}
\def\qedsketch{\ifmmode\Box\else{\unskip\nobreak\hfil
\penalty50\hskip1em\null\nobreak\hfil$\Box$
\parfillskip=0pt\finalhyphendemerits=0\endgraf}\fi}
\newenvironment{proofof}[1]{\begin{trivlist} \item {\bf Proof
#1:~~}}
  {\qed\end{trivlist}}
\newcommand{\marginlabel}[1]%
{\mbox{}\marginpar{\it{\raggedleft\hspace{0pt}#1}}}
\definecolor{Mygray}{gray}{0.8}
\let\csname ifcommentflag\expandafter\endcsname
\newcommand{\todo}[1]{\colorbox{Mygray}{\color{red}#1}}
\newcommand{\todo}[1]{}
\newcommand{\Authornote}[2]{{\sf\small\color{red}{[#1: #2]}}}
\newcommand{\Authoredit}[2]{{\sf\small\color{red}{[#1]}\color{blue}{#2}}}
\newcommand{\Authorfnote}[2]{\footnote{\color{red}{#1: #2}}}
\newcommand{\Authorfixme}[1]{\Authornote{#1}{\textbf{??}}}
\newcommand{\Authormarginmark}[1]{\marginpar{\textcolor{red}{\fbox{
#1:!}}}}
\newcommand{\Authornote}[2]{}
\newcommand{\Authoredit}[2]{}
\newcommand{\Authorcomment}[2]{}
\newcommand{\Authorfnote}[2]{}
\newcommand{\Authorfixme}[1]{}
\newcommand{\Authormarginmark}[1]{}
\newlength{\pgmtab}  
\newcounter{lecnum}
\newlength{\tpush}
\begin{document}
\title{\bf On the Computational Complexity of  Limit Cycles in Dynamical Systems}
\date{}

\begin{abstract}
We study the Poincar\'e-Bendixson theorem for two-dimensional continuous  dynamical systems in compact domains from the point of view of computation, seeking algorithms for finding the limit cycle promised by this classical result.  We start by considering a discrete analogue of this theorem and show that both finding a point on a limit cycle, and determining if a given point is on one, are {\bf PSPACE}-complete. 
 For the continuous version,  we show that both problems are uncomputable in the real complexity sense; i.e., their complexity is  arbitrarily high.  Subsequently, we introduce a  notion of an {\em approximate cycle} and  prove an {\em approximate Poincar\'e-Bendixson theorem} guaranteeing that some orbits come very close to forming a cycle in the absence of approximate fixpoints; surprisingly, it holds for all dimensions.  The corresponding computational problem defined in terms of arithmetic circuits is {\bf PSPACE}-complete.   
 \end{abstract}

\author{Christos H. Papadimitriou}
\thanks{University of California, Berkeley} 

\author{Nisheeth K. Vishnoi}
\thanks{\'{E}cole Polytechnique F\'{e}d\'{e}rale de Lausanne (EPFL)}

\maketitle

\vspace{-12mm}

\begin{figure}[h]
\hspace{1.6cm}
{\includegraphics[height=5cm]{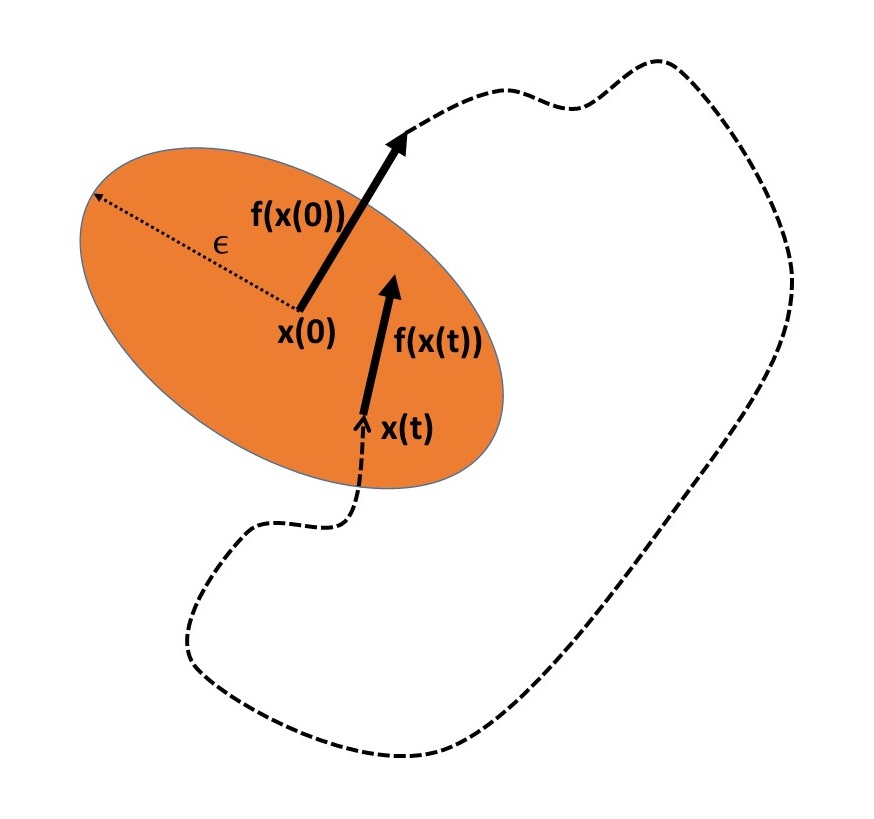}} 
\end{figure}

\vspace{-10mm}

\tableofcontents

\thispagestyle{empty}

\newpage

\setcounter{page}{1}
\section{Introduction}
Dynamical systems are ubiquitous in the study of physical, biological, and social phenomena.  A continuous time dynamical system describes the evolution of a process $x$ through a differential equation $\dot{x} = f(x),$
where $f:\mathbb{R}^n \rightarrow \mathbb{R}^n$ and $\dot{x} \defeq \left( \frac{{\rm d}x_1}{{\rm d}t}, \ldots, \frac{{\rm d}x_n}{{\rm d}t}\right)$, where $n$ is the dimension.
Linear dynamical systems such as $f(x)=Ax+b$ can be completely described by the eigenvalues and eigenvectors of the matrix $A.$  However, linear systems fail to capture many important phenomena, and hence the theory of dynamical systems is primarily concerned with nonlinear functions, typically assumed to be analytically nice; continuous, differentiable, etc.
We are particularly interested in dynamical systems in which the domain of $f$ is a compact subset of $\mathbb{R}^n$, such as a simplex. 

A dynamical system gives rise to a set of {\em trajectories};  a set of values of $x(t)$ for $t>0$ for a given $x(0)$. Under appropriate assumptions, such trajectories are unique given $x(0)$.
Understanding a dynamical system entails understanding the limiting behavior  
of its trajectories.  Since trajectories are continuous curves in a compact domain, they may contain {\em limit sets}, that is, sets of points that are limits of convergent subsequences.  Two particularly important, and easy to describe,
types of limit sets are {\em fixpoints} (roots of $f(x)$) and {\em limit cycles} (closed trajectories that capture periodic behavior).  But there are many other kinds of limit behaviors in dynamical systems --- including the aptly named {\em strange attractors} \cite{Hirsch2004differential}.   The study of the {\em unpredictability} of dynamical systems, also known as {\em  Chaos Theory} \cite{Strogatz2001} is essentially the study of very complex types of limit sets.

However, {\em there is no chaos in two-dimensional dynamical systems,} and the intuitive reason is planarity: trajectories cannot cross, and therefore they ``confine'' one another into benign behavior.  The rigorous statement to this effect is an important result dating back to the end of the 19th century, first stated by Poincar\'e and later proved in its generality by Bendixson \cite{Poincare,Bendixson}.  

\medskip\noindent {\bf Poincar\'e--Bendixson Theorem:}
{\em In a two-dimensional dynamical system $\dot x = f(x)$ on a compact domain where $f$ is continuously differentiable and has no fixpoints, all limit points lie on limit cycles.}

\medskip
\noindent
This theorem not only has many important implications in  physics and biology,\footnote{For instance, it implies the existence of stable oscillations of the van der Pol oscillator \cite{Strogatz2001} which arises naturally in electrical circuits, modelling neurons and seismology. Another application is to glycolysis models: the process living cells use to obtain energy by breaking down sugar molecules, see \cite{hess1979glycolytic,Strogatz2001}.} it is also important in mathematics,\footnote{For instance the Brouwer Fixed Point theorem in two dimensions can be obtained as a corollary.} see  \cite{PBSurvey}. 
Interestingly, it has variants which hold in higher dimensions under suitable assumptions. For instance,  when the system is ``effectively two-dimensional'', or of bandwidth two; i.e., of the form $\dot{x_i}=f_i(x_i,x_{(i-1)\bmod n}), i = 1,\ldots,n$, when the $f_i$s are monotone \cite{MalletParet}.  Such systems have been proposed as models of the origins of life by Eigen and Schuster \cite{eigen1979hypercycle} and, in fact, are what originally interested us in this problem. We explain this connection Section~\ref{sec:discussion}.\footnote{Incidentally, another interesting special case of multidimensional dynamical systems with (almost certain) periodic behavior arises in Chazelle's work on influence systems \cite{Chazelle12}.} 

\subsection{Our Contributions}

{\em In this paper we consider the Poincar\'e-Bendixson theorem from the viewpoint of computation.}  Suppose that we are given a  two-dimensional dynamical system over a compact domain, which is guaranteed to not have a fixpoint (several classes of natural systems are guaranteed to have none, and, in many others, subdomains in which there are no fixpoints can be identified). 
How difficult is it to find a point on a limit cycle?  Or to tell if a given point lies on one?

We first look at these questions in a discrete planar domain: a grid of points, where the dynamical system is an implicit map from each grid point to one of its eight neighbors (grid points at $\ell_\infty$ distance one) such that no two edges cross.  In such discrete dynamical systems, it is clear that the limit cycles correspond to the ``sink cycles'' of the directed graph, and hence a discrete version of the  Poincar\'e-Bendixson theorem trivially holds.  Computationally,  we can show the following:
 
\medskip\noindent {\bf Discrete Poincar\'e--Bendixson Theorem (Theorem \ref{thm:discrete-search}, Theorem \ref{thm:discrete-decision}):}
{\em Given a polynomially computable non-crossing function on a finite subset of the planar grid  
which has no fixpoints, a cycle always exists, but finding it is {\bf PSPACE}-complete.}

\medskip \noindent 
Returning to the continuous domain, given a two-dimensional dynamical system $\dot x = f(x)$ in a compact domain which has no fixpoints, we want to find a limit cycle.  To make the question not a priori impossible, we assume that the function $f$ is Lipschitz continuous and polynomially computable. 
One way to approach this question is by using the framework of ``black box'' complexity of real functions, 
as pioneered in the 1980s by Ko~\cite{Ko}.  Unsurprisingly, here we have a stark impossibility result, which is a corollary of the intractability of finding and testing roots of real functions~\cite{Ko}.

\medskip\noindent {\bf Poincar\'e--Bendixson Impossibility Theorem (Theorem \ref{thm:cts-uncomputable}):}
{\em The problem of finding a point on a limit cycle of $\dot x = f(x)$ in $[0,1]^2$, or the problem of determining if a given point is at most $\delta>0$ away from a limit cycle, with black box access to a Lipschitz and continuously differentiable $f$, has arbitrarily high complexity.}

\medskip 
\noindent 
Finding a fixpoint, if it does exist, is similarly uncomputable.  These results are proved in an appendix.  

Two avenues suggest themselves for getting around this negative result:  either ``look inside the black box'' that computes $f$, or settle for an approximate notion of a limit cycle; {\em we take both.}  First, an {\em approximate} limit cycle would be an orbit that ``comes close to itself'' --- but this is tricky to define: trivially, if we travel infinitesimally, we are close to the point from where we started, but obviously this is not what we mean. A better idea would be to demand intermediate points with all possible gradients, but this confines us to two dimensions. Our definition of an  $\eps$-cycle (see Section \ref{sec:approximate-cycle}) requires that the orbit starting at point $x$ intersects the normal hyperplane to $f(x)$ at $x$ within a distance $\eps$ to $x$. By choosing $\eps <L\|f(x)\|/2$, where $L$ is the Lipschitz constant, the derivatives at the two points must have positive inner product. This indeed gives something that looks and feels like an approximate cycle (see the figure below the abstract).
We prove the following:

\medskip\noindent {\bf Approximate Poincar\'e--Bendixson Theorem (Theorem \ref{thm:approx}):}
{\em Given a dynamical system $\dot x = f(x)$ in a compact domain {\em of any dimension} where $f$ is $L$-Lipschitz continuous and has no $\epsilon$-fixpoints\footnote{An {\em $\epsilon$-fixpoint} is a point $x$ such that $\|f(x)\|< \epsilon$.} an $\nfrac{\epsilon}{3L}$-cycle exists  in the orbit of every point.} 

\medskip \noindent
The proof parallels the original one \cite{Poincare,Bendixson}, except that compactness arguments are replaced by a volume argument.   Notice the uncanny similarity with the statement of
the Poincar\'e-Bendixson theorem stated above --- except for the fact, interesting in itself, that approximation blunts the distinction between two-dimensional and higher-dimensional systems.   
\def\pos{\hbox{\sc sgn}}

How hard is it then to identify $\epsilon$-cycles?  We study this problem in a framework of {\em arithmetic circuits}, with arithmetic operations such as real addition, multiplication and sign as gates, akin to the ones used in the study of the complexity of fixpoints \cite{DGP,CD,EY}. Such a circuit divides the domain into exponentially many cells and encodes a polynomial in each of the cells. We make the natural  assumption that each such polynomial has degree which is a polynomial in the overall size of the circuit. We can show:

\smallskip\paragraph{\bf Complexity of $\epsilon$-cycle (Theorem \ref{thm:eps-complexity})}  {\em Given  $\eps,L>0,$  an $L$-Lipschitz dynamical system through an arithmetic circuit, and a point $x$, determining whether $x$ lies on an $\nfrac{\eps}{L}$-cycle, or finding a point that does, is {\bf PSPACE}-complete. 
}

\noindent
The {\bf PSPACE} upper bound involves solving the differential equation in each cell of the domain as a real analytic function through the Cauchy-Kowalevski Theorem, approximating the solution exponentially closely, and using the Existential Theory of Reals, as well as arithmetic polynomial identity testing and root-finding of analytic functions, to carry out the necessary tests.  The lower bound entails implementing the reductions of our Discrete Poincar\'e-Bendixson Theorem as continuously differentiable functions in a way that does not change the limit cycles.  

Two challenging and important problems remain open:  First, in systems with no fixpoints, can the true limit cycle guaranteed by the Poincar\'e- Bendixson Theorem be approached in polynomial space?   Naturally, in such systems our above-mentioned result allows us to find in polynomial space $\epsilon$-cycles for arbitrarily small $\epsilon$, but these may be very far from a true limit cycle.  
And second, in the specific multi-dimensional dynamical system that was proposed by Eigen and Schuster  \cite{eigen1979hypercycle} as a model for the origin of life, can the limit cycle be approached in polynomial time?  

\section{The Discrete Poincar\'e-Bendixson Theorem}\label{sec:discrete}
\def\ff{f_{\phi}}
In this section we consider a discrete version of the Poincar\'e-Bendixson theorem in two dimensions, and establish the computational hardness of the corresponding statement.  

In particular, suppose that we are given succinct access to a directed graph whose vertices are subset of an exponentially dense grid in $[0,1]^2:$  Given the bit representation of a vertex, we can find out in polynomial time ``where to go next from this vertex''.   In order to retain the planar structure of the two-dimensional dynamical systems, {\em we insist that in no square of the grid both diagonals be used,} that is, the flow does not cross itself. The discrete analogue of a limit cycle in this case is a sink strongly connected component\footnote{A sink strongly connected component is a strongly connected component that has no edge leaving it.} 
in the resulting directed graph.  Because of the nature of this directed graph (all out-degrees are one), sink connected components are precisely {\em cycles} in the graph.  To see this, notice that (a) there is no sink node, since every vertex goes somewhere; (b) any  strongly connected component which is not a cycle must contain a node with out degree greater than one; 
and (c) any non-sink cycle must contain a node with out degree greater than one.  
Thus the analog of a limit cycle in the discrete case is {\em a cycle}.  We give a self-contained proof that both problems of interest are {\bf PSPACE}-hard: (1) to tell if a given point in the graph lies on a cycle; and (2) to find a point on a cycle.   

We reduce to these two problems {\sc QuantifiedBooleanFormula} (QBF), the problem of checking whether a given quantified  formula $\mathcal{I}=Q_nx_n\cdots Q_1x_1\phi(x_1,\ldots,x_n)$ on $n$ variables where each $Q_i \in \{\exists,\forall\}$ has a satisfying truth assignments.  
{We  define a graph implicitly given by $\mathcal{I}$ such that it can contain exactly one of two possible limit cycles: one will occur in the case when $\mathcal{I}$ is  TRUE, and the other when it is FALSE.}
(As it turns out, the reduction for the problem (2) above, finding a point in the cycle, is many-to-one.)  

\subsection{The Reduction} Since it is convenient to ensure that there is no fixed point, we choose the domain to be a subset of $[0,1]^2$ which has a {\em hole}; a torus.
In particular, we let 
$$ T \defeq [0,1]^2 \backslash (\nfrac{1}{7}, \nfrac{6}{7})^2.$$ 
We subdivide $T$ into squares. Each square is divided into {\em grid} points and no two squares share grid points. 
In particular all but the {\em core} square have $2(n+1)+1 \times 2(n+1)+1$ grid points,  where $n$ is the number of Boolean variables in $\phi.$ The core square is a grid of $2^n (1+n(n+1)/2)$ rows and $2(n+1)+1$ columns. Thus, each grid point in each of the squares can be described by local coordinates which are a pair of integers $(i,j).$   In other words, when dealing with grid points we shall  omit denominators. 
 This is our set of nodes. 
The edges will be defined implicitly, through an algorithm.  For the given $\mathcal{I},$ we define a function $f_{\mathcal{I}}$ at every point in $T_n$ which has the property that for each $(i,j)\in T_n$, the function $f_{\mathcal{I}}(i,j)\in \{(i+1,j),(i-1,j),(i,j+1),(i,j-1),(i+1,j+1), (i-1,j+1)\}$  We will denote these values by $R, L, U, D, UR, UL,$ respectively meaning right, left, up, down, diagonally up-and-right, and diagonally up-and-left.  To retain the non-crossing property,
 we require that if $f_{\mathcal{I}}(i,j) = UR$ then $f_{\mathcal{I}}(i,j+1) \neq UL$.  
 
We can think of $f_{\mathcal{I}}$ as a directed graph with out-degree 1 on the vertex set $T_n$ with edges $((i,j),f_{\mathcal{I}}(i,j))$.  Notice that $f_{\mathcal{I}}$ has no fixed point (the corresponding graph has no sink vertex), even though it will have sources, vertices with in-degree zero.

The domain $T$ consists of twenty four ($=7^2 - 5^2$) squares of size ${1\over 7}\times {1\over 7}$, and therefore $T_n$ is the union of twenty-four grids; we call these grids $S_1,S_2,\ldots, S_{24}$, clockwise from the top, see Figure \ref{fig:overview}.  All but $S_{19}$ are grids of size $2(n+1)+1 \times 2(n+1)+1.$ In all but three of these, the function $f_{\mathcal{I}}$ will be very simple.  If $(i,j)\in S_k$ for $k\notin \{18,19,20\}$, then $f_{\mathcal{I}}(i,j)$ will describe a \emph{laminar clockwise flow}: e.g.,  if $(i,j)\in S_k$ with $k\in\{23,24,1,2,3\}$ then $f_{\mathcal{I}}(i,j)=R$, for  $k \in \{5,6,7,8,9\}$ then $f_{\mathcal{I}}(i,j)=D$, and if $(i,j)\in S_4$, the upper-right corner square, then $f_{\mathcal{I}}(i,j) = R$ if $i<j$ and $f_{\mathcal{I}}(i,j)=D$ otherwise, effecting a right turn of the flow. 

\begin{figure}[h!]
\subfigure[Flow in the non-core squares]
{\includegraphics[height=6cm]{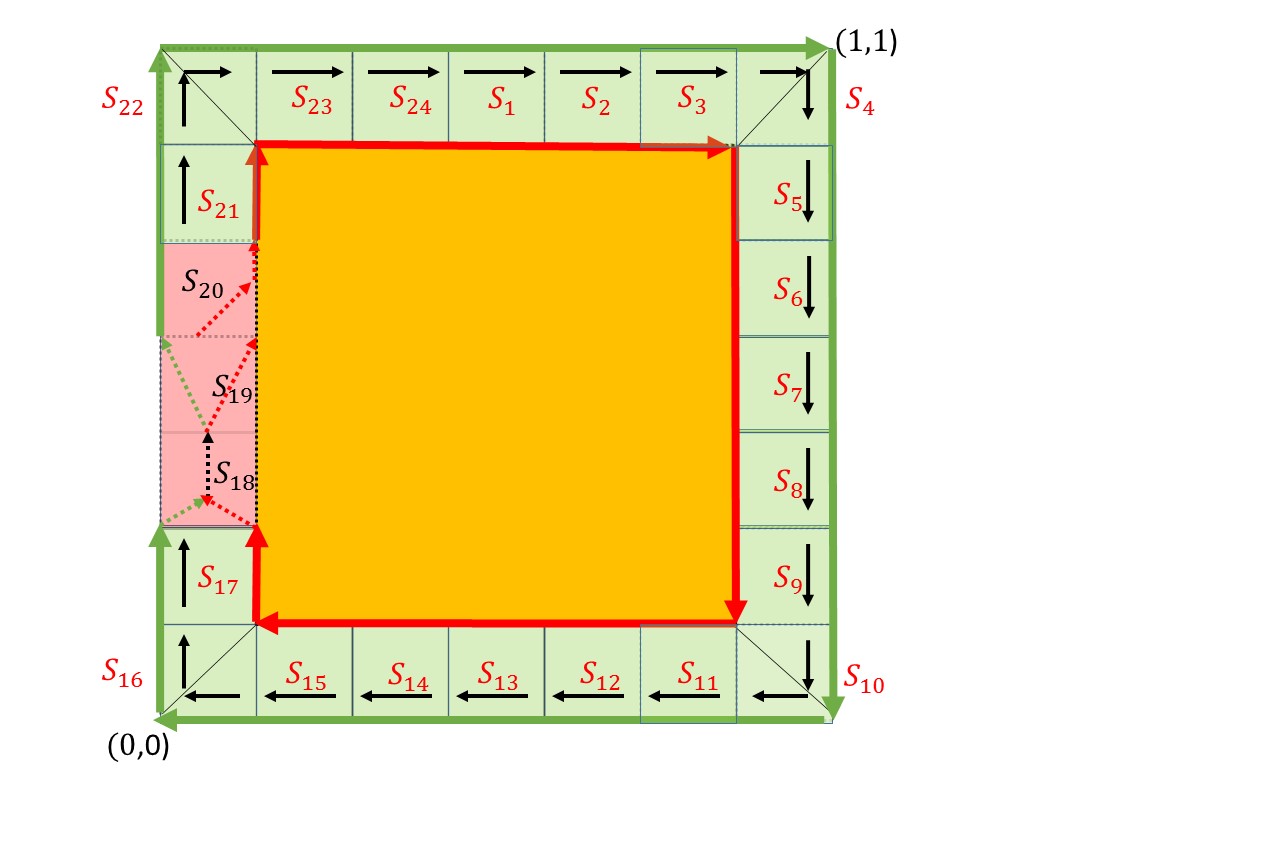}\label{fig:overview}}
\hspace{-2cm}
\subfigure[The single variable case in  $S_{19}$]
{\includegraphics[height=6cm]{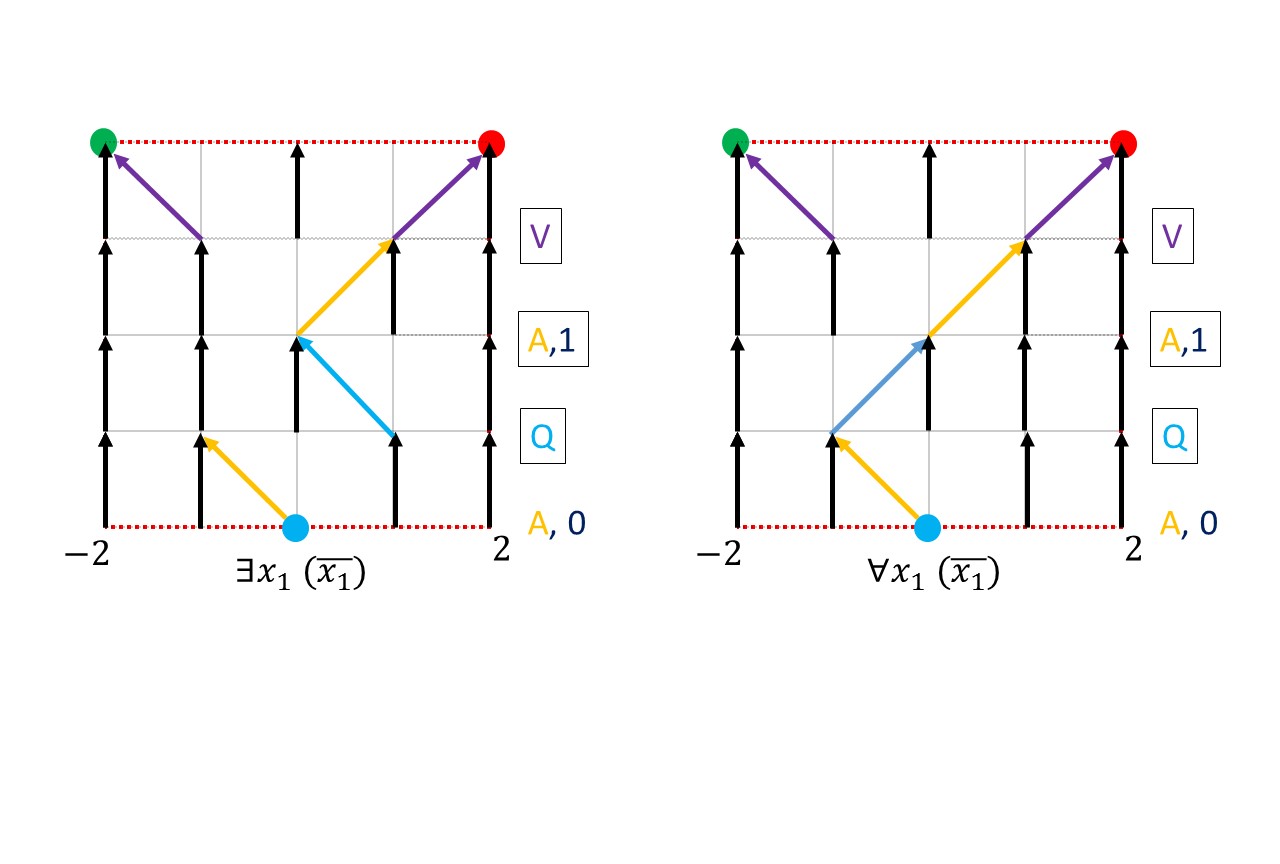}\label{fig:s19}}
\caption{The overall reduction and the core square}
\end{figure}

\begin{figure}[h!]
\subfigure[Recursion when the first quantifier is $\exists$ ]
{\includegraphics[height=6cm]{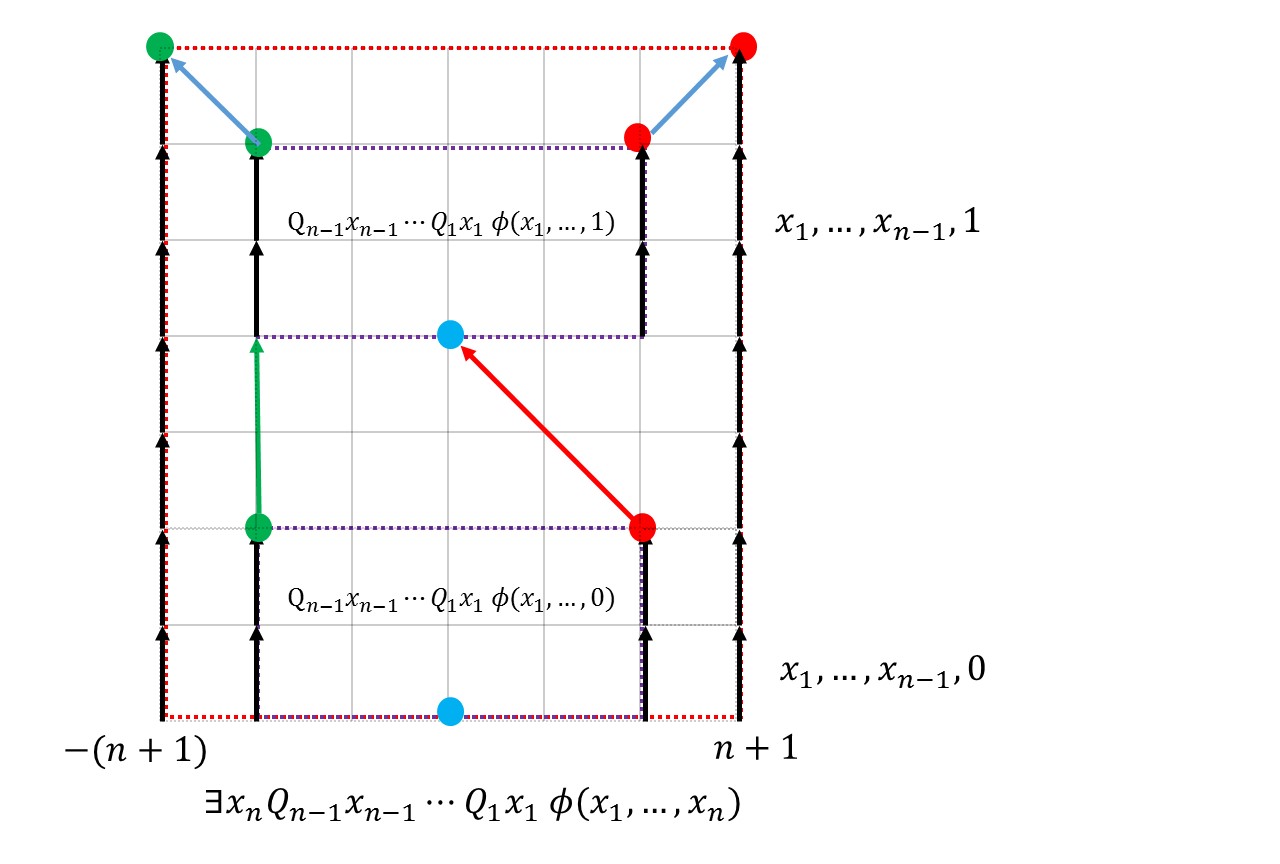}\label{fig:exists}}
\hspace{-2cm}
\subfigure[Recursion when the first quantifier is $\forall$]
{\includegraphics[height=6cm]{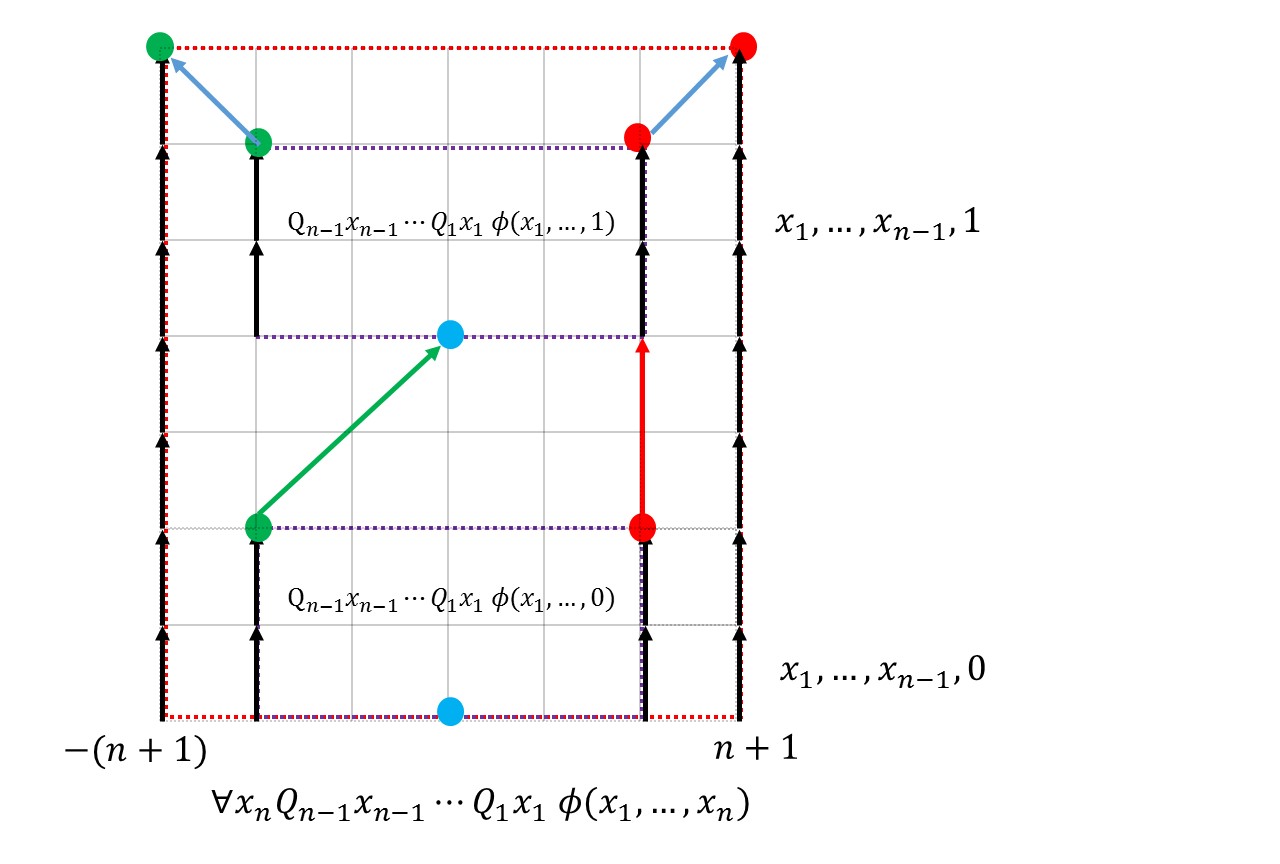}\label{fig:forall}}
\caption{The recursion in $S_{19}$}
\end{figure}

\paragraph{\bf The core square.}
The core of the reduction happens in square $S_{19}$ with some necessary pre- and post-processing  implemented in squares $S_{18}$ and $S_{20}$ respectively. In square $S_{19}$, the values of $f_{\mathcal{I}}$ depend crucially on the QBF instance ${\mathcal{I}}$.  
The total number of rows in the grid $S_{19}$ are $2^n \cdot (1+ n(n+1)/2)$ where there $2^n$ {\em blocks}, one each for an assignment $A=(A_1,\ldots,A_n)$ to $x_1,\ldots,x_n$; we deonte the block corresponding to $A$  by $B_A.$ 
The blocks are arranged according to increasing lexicographic order on bits. Thus, the bottom most  block corresponds to the assignment $\vec{0}=(0,0,\ldots,0)$ and the topmost block corresponds to the assignment $\vec{1}=(1,1,\ldots,1).$  Each block has exactly $1+n(n+1)/2$ rows.
  Each block $B_A$ has $1$ row for the assignment $A$; we call this row $R_{A,0}.$ For each $1 \leq i \leq n,$ there is a block of $i$ rows denoted by $R_{A,i}$ in $B_A.$ Index the rows of $R_{A,i}$ by $\{0,1,\ldots, i-1\}$ and denote the $k$-th row by $R_{A,i,k}.$ 
 There are $2(n+1)+1$  columns in $S_{19}$ that are indexed by integers from $-(n+1)$ to $n+1.$ Thus, $S_{19}$ has $2^n \times (1+n(n+1)/2)$ rows and $2(n+1)+1$ columns.
 
 \paragraph{\bf The flow in the core square.}    
   Our intention  is to use  $\mathcal{I}=Q_nx_nQ_{n-1}x_{n-1}\ldots Q_1x_1 \phi$ to define a flow such that if we start tracing it from the middle column of the bottommost row of $S_{19},$ ($R_{\vec{0},0},0),$ we end up at either the left corner of the topmost row of $S_{19}$ or the right corner of the topmost row depending on whether $\mathcal{I}$ is TRUE or FALSE respectively. Thus, the flow {\em runs} over all the $2^n$ assignments, and ensures that the quantifiers are satisfied and by the time it is out of $S_{19}$ it knows whether $\mathcal{I}$ is TRUE or FALSE.  
   
   All the flow from below $S_{19}$ is routed to $(R_{\vec{0},0},0).$ Thus, no matter where one starts from, we are brought back to $(R_{\vec{0},0},0).$ To ensure the condition on the flow in the core square, we use recursion. For an assignment $A,$ let $A_n$ be its most significant bit and let $Q_n$ be the quantifier corresponding to $x_n.$ Based on $A_n$ we can divide the set of blocks into two contiguous set: $\{B_{0A'}\}_{A' \in \{0,1\}^{n-1}}$ and $\{B_{1A'}\}_{A' \in \{0,1\}^{n-1}}.$ We would like to ensure recursively that if we start at $(R_{0\vec{0},0},0)$ then if the quantifier $Q_n$ is $\exists$ we end up at $(R_{1\vec{0},0},-n)$ if $Q_{n-1}x_{n-1}\cdots Q_1x_1 \phi(x_1,\ldots, x_{n-1},0)$ is TRUE and at $(R_{1\vec{0},0},0)$ if FALSE (see Figure \ref{fig:exists}). On the other hand if $Q_n=\forall,$ then starting at $(R_{0\vec{0},0},0),$ we end up at $(R_{1\vec{0},0},0)$ or $(R_{1\vec{0},0},n)$ depending on whether $Q_{n-1}x_{n-1}\cdots Q_1x_1 \phi(x_1,\ldots, x_{n-1},0)$ is TRUE or FALSE respectively (see Figure \ref{fig:forall}). The base case is illustrated in Figure \ref{fig:s19}.

  We now define the function $f_{\mathcal{I}}$ in $S_{19}$ formally. For the grid points in $S_{19}$ not covered in the below cases, the default flow is $U.$    
 
 \begin{itemize}
 \item {\bf $R_{A,0}$-rule.} For each $A\in \{0,1\}^n,$ $f(R_{A,0},0)=UL$ if $A$ satisfies $\phi$ and $f(R_{A,0},0)=UR$ if $A$  does not satisfy $\phi.$
 
 \item {\bf $R_{A,i}$-rule.} We first associate an appropriate {\em type} $Q$ (for quantifier), $V$ (value), $E$ (empty) to each block of rows $R_{A,i}$ as follows.
 
 \begin{enumerate}
 \item IF $A_i=0$ and $\forall l<i$ $A_l=1,$ THEN type is $Q$ and further
 \begin{enumerate}
 \item IF $Q_i=\forall$ THEN $f_{\mathcal{I}}(R_{A,i,m},-i+m)=UR$ for $m \in \{0,1,\ldots, i-1\}.$ 
 \item  ELSE IF $Q_i=\exists$ THEN $f_{\mathcal{I}}(R_{A,i,m},i-m)=UL$ for $m \in \{0,1,\ldots, i-1\}.$ 

 \end{enumerate}
 \item IF $A_i=1$ and $\forall l<i$ $A_l=1,$ then type is $V$ and, $f_{\mathcal{I}}(R_{A,i,0}, -i)=UL$ and $f_{\mathcal{I}}(R_{A,i,0}, i)=UR.$
 \item ELSE  type is $E$ and the default $U$ flow is used everywhere. 
 \end{enumerate}
 \end{itemize}

\noindent
The following lemma is now evident.
 \begin{lemma}
 Starting at $R_{\vec{0},0}$ the trajectory goes to the upper left corner of $S_{19}$ if the QBF is TRUE and to the upper right corner of $S_{19}$ if the QBF is FALSE.
  \end{lemma}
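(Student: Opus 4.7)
The plan is to prove the lemma by induction on $n$, the number of quantified variables in $\mathcal{I}$. For the base case $n=0$ the core square reduces to the single row $R_{\vec{0},0}$ with three columns $\{-1,0,1\}$, and the $R_{A,0}$-rule at $(R_{\vec{0},0},0)$ sends the trajectory $UL$ or $UR$ according to whether the constant formula $\phi$ is TRUE or FALSE, immediately exiting at the upper-left or upper-right corner.

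For the inductive step I would decompose $\mathcal{I}=Q_n x_n\,\mathcal{I}_{b}$, writing $\mathcal{I}_b=Q_{n-1}x_{n-1}\cdots Q_1 x_1\,\phi(x_1,\ldots,x_{n-1},b)$ for $b\in\{0,1\}$. The key structural observation is that the lower half of $S_{19}$ (the blocks $B_{0A'}$, $A'\in\{0,1\}^{n-1}$) behaves on the trajectory exactly like the $(n-1)$-variable core square for $\mathcal{I}_0$, modulo the extra row-block $R_{A,n}$ at the top of each block which is of type $E$ (pure default $U$) everywhere except at $B_{0\vec{1}}$ and thus contributes only vertical passthrough rows between consecutive blocks. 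This identification works because for $i\le n-1$ the $R_{A,i}$-rule depends only on the bottom $i$ bits of $A$, and the extra columns $\pm(n+1)$ are untouched by any rule at level $\le n-1$. The same correspondence holds for the upper half and $\mathcal{I}_1$.

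Invoking the inductive hypothesis on the lower half then yields that the trajectory starting at $(R_{0\vec{0},0},0)$ reaches the top of $R_{0\vec{1},n-1}$ at column $-n$ if $\mathcal{I}_0$ is TRUE and $+n$ if FALSE. The exceptional row-block $R_{0\vec{1},n}$ implements $Q_n$ via its $Q$-rule: for $Q_n=\exists$, the $UL$-diagonal spans the positive columns $\{n,n-1,\ldots,1\}$, so the column $-n$ signal passes straight up to $(R_{1\vec{0},0},-n)$ while the column $+n$ signal is captured by the diagonal and routed to $(R_{1\vec{0},0},0)$; the two cases are swapped for $Q_n=\forall$ (whose $UR$-diagonal occupies the negative columns). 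Thus the upper half is entered at a column $c\in\{-n,0,+n\}$ determined by $Q_n$ and $\mathcal{I}_0$. If $c=0$ the inductive hypothesis applies inside the upper half and transfers the truth value of $\mathcal{I}_1$ to column $\pm n$ at the top of $R_{\vec{1},n-1}$; if $c=\pm n$ the trajectory proceeds vertically through the upper half to $(R_{\vec{1},n,0},\pm n)$. Finally, the $V$-rule at $R_{\vec{1},n,0}$ pushes $\pm n$ to $\pm(n+1)$, from which default $U$ exits at the upper-left or upper-right corner of $S_{19}$.

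Combining the four sub-cases shows that the upper-left corner is reached iff $\mathcal{I}_0\vee\mathcal{I}_1$ is TRUE when $Q_n=\exists$, and iff $\mathcal{I}_0\wedge\mathcal{I}_1$ is TRUE when $Q_n=\forall$, both being the truth value of $\mathcal{I}$, completing the induction. The main obstacle I anticipate is the ``passthrough'' claim for the upper half when entered off-center at $\pm n$: one must verify that no $R_{A,i}$-rule with $i<n$ intercepts the trajectory at those columns. This reduces to the observation that all such rules are supported on columns lying in $\{-i,\ldots,i\}\subseteq\{-(n-1),\ldots,n-1\}$, while the non-crossing requirement on $f_{\mathcal{I}}$ guarantees that neighbouring diagonal instructions in the inner columns produce no spurious redirections of the ascending vertical flow.
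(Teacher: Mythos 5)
Your proof is correct and follows essentially the same route the paper intends: the paper declares the lemma ``evident'' after informally describing exactly the recursive invariant you formalize (entering the upper half at column $-n$, $0$, or $+n$ according to $Q_n$ and the truth of the $x_n=0$ branch, with the type-$Q$ block at $R_{0\vec 1,n}$ implementing the quantifier and the type-$V$ block at $R_{\vec 1,n}$ promoting the decided signal to the corner column). Your explicit verification of the off-center passthrough, and of the fact that the level-$n$ row-blocks are type $E$ except at the top of each half, fills in the details the paper leaves to the reader.
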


 \begin{figure}
\hspace{1cm}
\subfigure[]{\includegraphics[height=5cm]{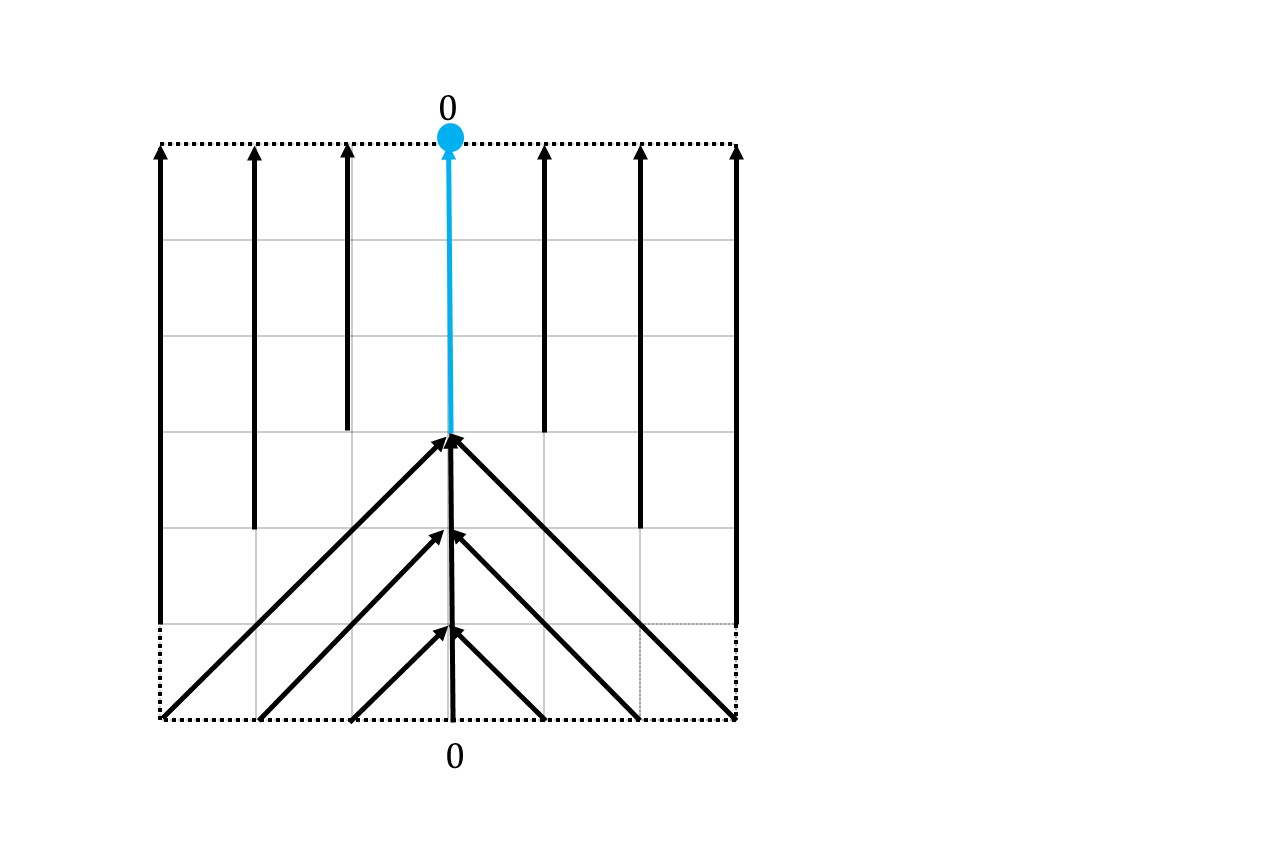}\label{fig:s18}}
\subfigure[]{\includegraphics[height=5cm]{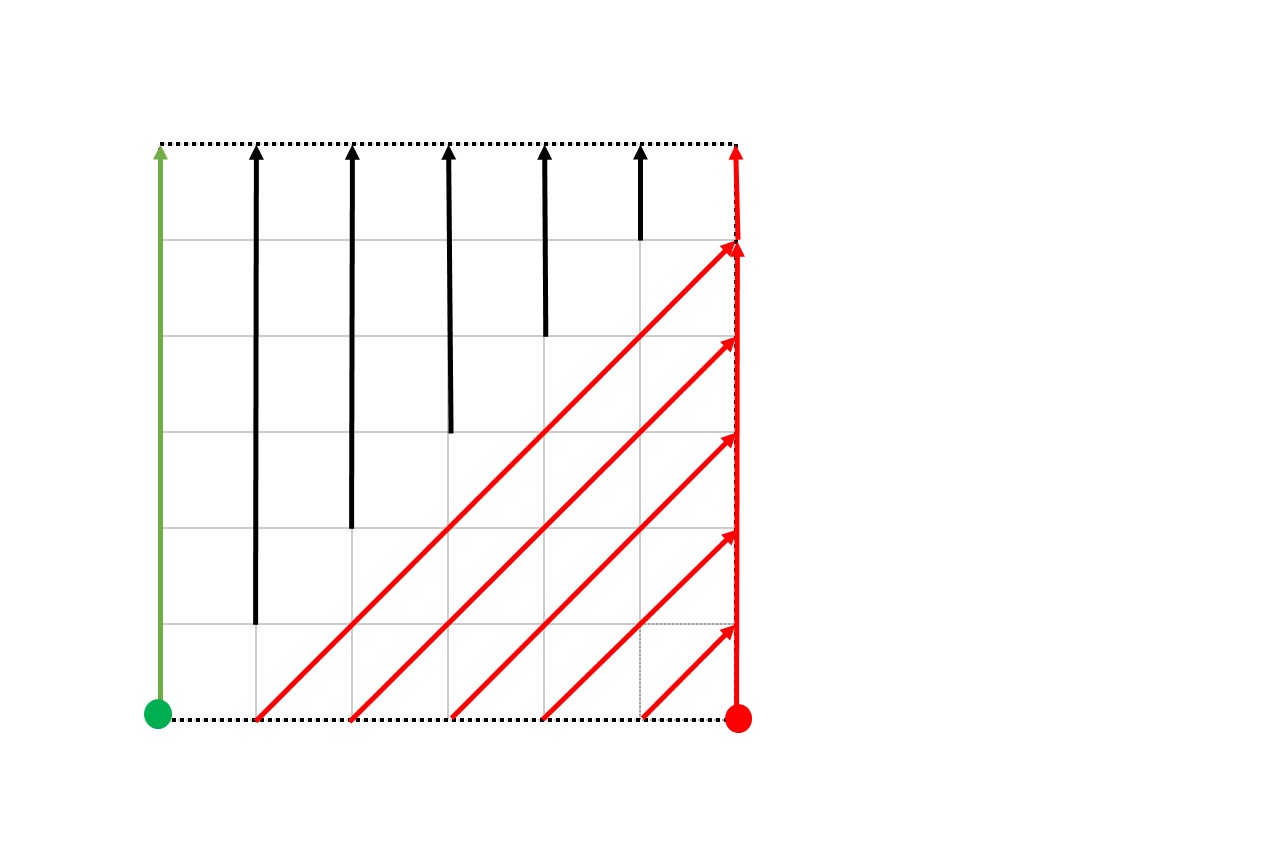}}
\caption{The left figure is $S_{18}$ for pre-processing and the right figure is $S_{20}$ for post-processing.}\label{fig:s18-20}
\end{figure}

\noindent
Now, define {\sc Discrete Limit Cycle} to be the following decision problem:  given $n$, a Boolean circuit $C$ that computes a displacement function $f(i,j)$ for point $(i,j)\in T_n$, and a query point $P=(i_p,j_p)\in T_n$, does $P$ lie on a cycle of the graph on $T_n$ defined by $C$?  It should be clear that the function $f_{\mathcal{I}}$ as defined in our reduction can be implemented by a Boolean circuit of size polynomial in $n$, given access to the quantifiers and a Boolean circuit for $\phi$.  Hence, through the reduction described above, we have shown the following (membership in {\bf PSPACE} follows from the fact that finding the strongly connected components of a graph is in logarithmic nondeterministic space):

\begin{theorem}[Discrete Poincar\'e-Bendixson--decision  version]\label{thm:discrete-search}
The problem {\sc Discrete Limit Cycle} is  {\bf PSPACE}-complete.
\end{theorem}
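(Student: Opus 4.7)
The plan is to establish both containment in \textbf{PSPACE} and \textbf{PSPACE}-hardness, using the explicit reduction constructed in the preceding pages.

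For the upper bound, I would exploit the fact that $f_{\mathcal{I}}$ has out-degree exactly $1$ at every vertex of $T_n$. Consequently, the trajectory of any point is eventually periodic, and a point $P$ lies on a cycle if and only if the trajectory starting at $P$ revisits $P$. The number of grid points is $|T_n| = 2^{O(n \log n)}$, so a polynomial-space counter together with a pointer to the ``current'' grid point suffice: iteratively apply $f_{\mathcal{I}}$ (which is computable in polynomial time from the given circuit $C$) for up to $|T_n|$ steps and check whether $P$ is encountered again. This places {\sc Discrete Limit Cycle} in \textbf{PSPACE}. Alternatively, the paper's own observation applies: strongly-connected-component computation on the implicitly given exponential graph lies in $\textbf{NPSPACE} = \textbf{PSPACE}$.

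For the lower bound, I would assemble the pieces already prepared. The preceding lemma shows that the trajectory starting at the bottom-center $(R_{\vec{0},0},0)$ of the core square $S_{19}$ exits at its upper-left corner if $\mathcal{I}$ is TRUE and at its upper-right corner if FALSE. The outer squares $S_1,\ldots,S_{24}$ carry a laminar clockwise flow, while $S_{18}$ and $S_{20}$ are designed so that the two possible exit corners of $S_{19}$ are routed along two distinct corridors that eventually feed back into the core at $(R_{\vec{0},0},0)$; because all flow on $T_n$ funnels through this single entry point, exactly one closed loop is realized in the graph, which I will denote $C_{\text{TRUE}}$ or $C_{\text{FALSE}}$ according to the truth value of $\mathcal{I}$. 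The query point $P$ in the reduction is then chosen as any grid point lying on the upper-left-exit corridor inside $S_{20}$; such a $P$ lies on the realized cycle if and only if $\mathcal{I}$ is TRUE, giving a many-one reduction from QBF.

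The main technical step I expect to require real care is verifying rigorously that $C_{\text{TRUE}}$ and $C_{\text{FALSE}}$ are genuinely distinct cycles rather than two transient segments feeding a common limit cycle, and that the chosen $P$ is actually \emph{on} the realized cycle rather than merely reachable from it. Because every vertex has out-degree $1$, the second condition reduces to the equivalence ``$P$ is on a cycle iff the trajectory from $P$ returns to $P$'', so it can be checked by tracing the corridor inside $S_{20}$ and confirming that the upper-left path is disjoint from the path taken under the upper-right exit. This is a finite, picture-level verification, but it is also the place where the non-crossing requirement on $f_{\mathcal{I}}$ (no two diagonals in a unit square) must be explicitly honored during the routing through $S_{18}$ and $S_{20}$. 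The polynomial-time computability of $f_{\mathcal{I}}$—needed for the reduction to be valid—is straightforward, since all rules in the core, in $S_{18}$, $S_{20}$, and in the laminar squares depend only on the local coordinates $(i,j)$, the quantifier string, and a Boolean circuit for $\phi$.
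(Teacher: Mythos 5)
Your proposal is correct and takes essentially the same route as the paper: membership follows from the out-degree-one structure (equivalently, the paper's observation that strongly connected components of the implicit graph can be found in nondeterministic logarithmic space, hence \textbf{PSPACE}), and hardness follows from the QBF reduction through the core square $S_{19}$, whose upper-left versus upper-right exit encodes the truth value and determines which of the two possible cycles is realized. The only (immaterial) difference is your choice of query point on the upper-left corridor inside $S_{20}$ rather than on the outer/inner boundary of $T_n$; both work because the unique cycle is the orbit of $(R_{\vec{0},0},0)$ and the two corridors are disjoint.
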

\noindent
In a dynamical system, the question of interest is not so much to tell if a point is on a limit cycle, but to {\em find} a point on a limit cycle, and our analysis so far has not given us clues about the complexity of this problem.   Let us define {\sc Point on Discrete Limit Cycle} to be the following computational problem:  given $n$ and a Boolean circuit $C$ as above, output a point that lies on a discrete limit cycle of the resulting discrete dynamical system.  We prove the following theorem:

\begin{theorem}[Discrete Poincar\'e-Bendixson--search version]\label{thm:discrete-decision}
The problem {\sc Point on Discrete Limit Cycle} is  {\bf PSPACE}-complete.
\end{theorem}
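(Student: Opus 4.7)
I would prove membership and hardness separately.

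For membership in \pspace, starting from any vertex $P_0 \in T_n$, iteratively compute $P_{k+1} = f(P_k)$ for $k = 0, \ldots, |T_n|$. Since the graph has out-degree one and $|T_n| = 2^{O(n)}$, the trajectory must enter a cycle by step $|T_n|$, so $P_{|T_n|}$ is on a cycle and is a valid output. The algorithm stores only the current position and a binary counter, using polynomial space.

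For \pspace-hardness I would give a polynomial-time many-to-one reduction from QBF based on the construction of Theorem~\ref{thm:discrete-search}. Given QBF instance $\mathcal{I}$, output the same pair $(n, f_{\mathcal{I}})$; from the cycle point $P$ returned by the search algorithm, decode $\mathcal{I}$'s truth value in polynomial time by inspecting $P$'s coordinates. The core claim is that the two potential limit cycles---one arising when $\mathcal{I}$ is true, the other when it is false---occupy globally distinguishable sets of vertices. Specifically, the true-cycle exits $S_{19}$ at its upper-left corner and subsequently threads the outer clockwise laminar flow along one specific streamline (at a distinct $y$-coordinate within each non-core square), while the false-cycle exits at the upper-right corner and threads a different streamline; the local traces inside $S_{18}, S_{19}, S_{20}$ are likewise disjoint. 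Both candidate streamline-traces are predictable in polynomial time from $\mathcal{I}$ alone, so identifying which of the twenty-four squares contains $P$ and comparing $P$'s local coordinates to the two candidate traces recovers the truth value of $\mathcal{I}$.

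The main obstacle I anticipate is that the two candidate cycles need not be literally disjoint: most conspicuously the cell $(R_{\vec{0},0}, 0)$ at the bottom of $S_{19}$, where $S_{18}$ funnels the returning flow, lies on both, as do a handful of cells inside $S_{18}$ near the merge. At any such shared vertex a coordinate-only test cannot distinguish the two cycles. To eliminate the ambiguity, I would augment the construction by replacing each shared cell with a tiny gadget that duplicates the vertex into a ``true copy'' and a ``false copy'' according to the direction of the incoming flow (itself determined by which outer-ring streamline the cycle used). This local surgery preserves the non-crossing property, keeps $f$ polynomial-time computable, and leaves the overall cycle structure intact, while making the two candidate cycles entirely disjoint. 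The polynomial-time coordinate test on any output $P$ then recovers the QBF answer, completing the reduction.
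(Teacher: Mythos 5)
Your membership argument (iterate the out-degree-one map for $|T_n|$ steps; the current point is then on a cycle) is fine, and your hardness strategy --- reuse the construction of Theorem~\ref{thm:discrete-search}, force the two candidate cycles to be vertex-disjoint, and decode the QBF answer from the coordinates of whatever point the search algorithm returns --- is exactly the paper's. You also correctly identify the central obstacle: the TRUE-cycle and the FALSE-cycle share vertices, and at a shared vertex a coordinate test is worthless. However, both your assessment of the overlap and your proposed repair have genuine problems. The overlap is not ``a handful of cells'': the \emph{entire} climb through $S_{19}$ --- the exponentially long trajectory from $(R_{\vec{0},0},0)$ through all $2^n$ blocks --- lies on the limit cycle in both scenarios, because $f_{\mathcal{I}}$ restricted to $S_{19}$ is the same function whether or not $\mathcal{I}$ is true; the truth value only shows up in which top corner that climb exits from. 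So your claim that ``the local traces inside $S_{18},S_{19},S_{20}$ are likewise disjoint'' is false for $S_{19}$, a point returned from the middle of the climb is undecodable (deciding which corner it will eventually reach \emph{is} the QBF question), and no local surgery on a few merge cells can help.

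Moreover, the proposed gadget --- splitting a shared vertex into a ``true copy'' and a ``false copy'' according to the direction of the incoming flow --- is not implementable in this model: the graph has out-degree one, so $f(P)$ is a function of $P$ alone and cannot depend on how the trajectory arrived at $P$. Separating the two flows requires routing them through disjoint sets of grid points \emph{before} they would merge, which is a global modification of the gadget, not a local one. This is what the paper's proof actually does: it stacks two copies of the computation gadget, $S_{18},S_{19}$ and $S_{18}',S_{19}'$ (scaled vertically by $1/2$), where $S_{18}$ additionally routes its bottom-left corner straight up the left edge and $S_{18}'$ routes its bottom-right corner straight up the right edge. On a TRUE instance the cycle performs the climb in one copy and merely skirts the left boundary of the other; on a FALSE instance it performs the climb in the other copy and skirts the right boundary. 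The two candidate cycles are then disjoint everywhere, and a simple coordinate test (outer vs.\ inner boundary in the ring squares; left boundary vs.\ not in $S_{18},S_{19}$; right boundary vs.\ not in $S_{18}',S_{19}'$) recovers the answer. Without this duplication of the whole computation gadget, or something equivalent to it, your reduction does not go through.
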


\begin{figure}[h!]
\subfigure[Modified $S_{18}$]
{\includegraphics[height=4cm]{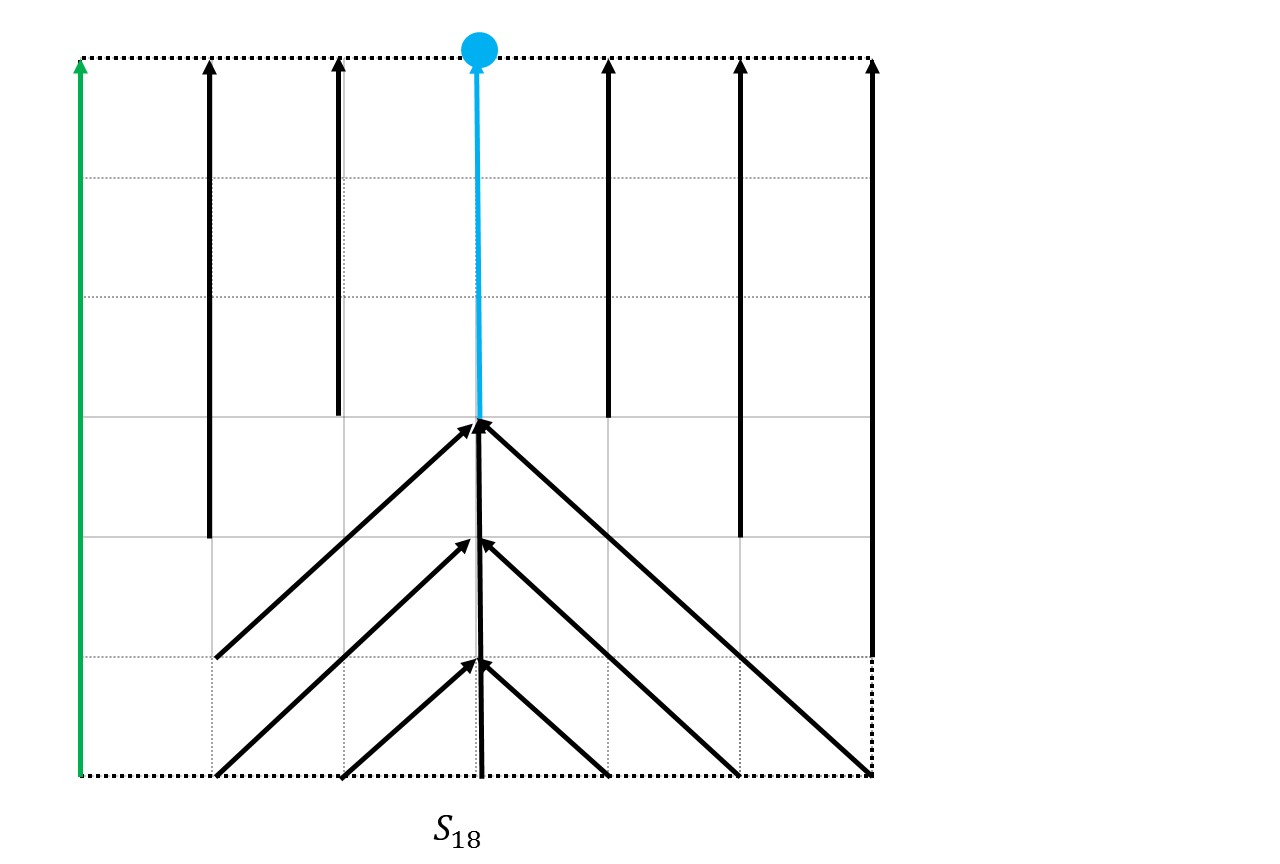}\label{fig:S18new}}
\hspace{-1cm}
\subfigure[$S_{18}'$]
{\includegraphics[height=4cm]{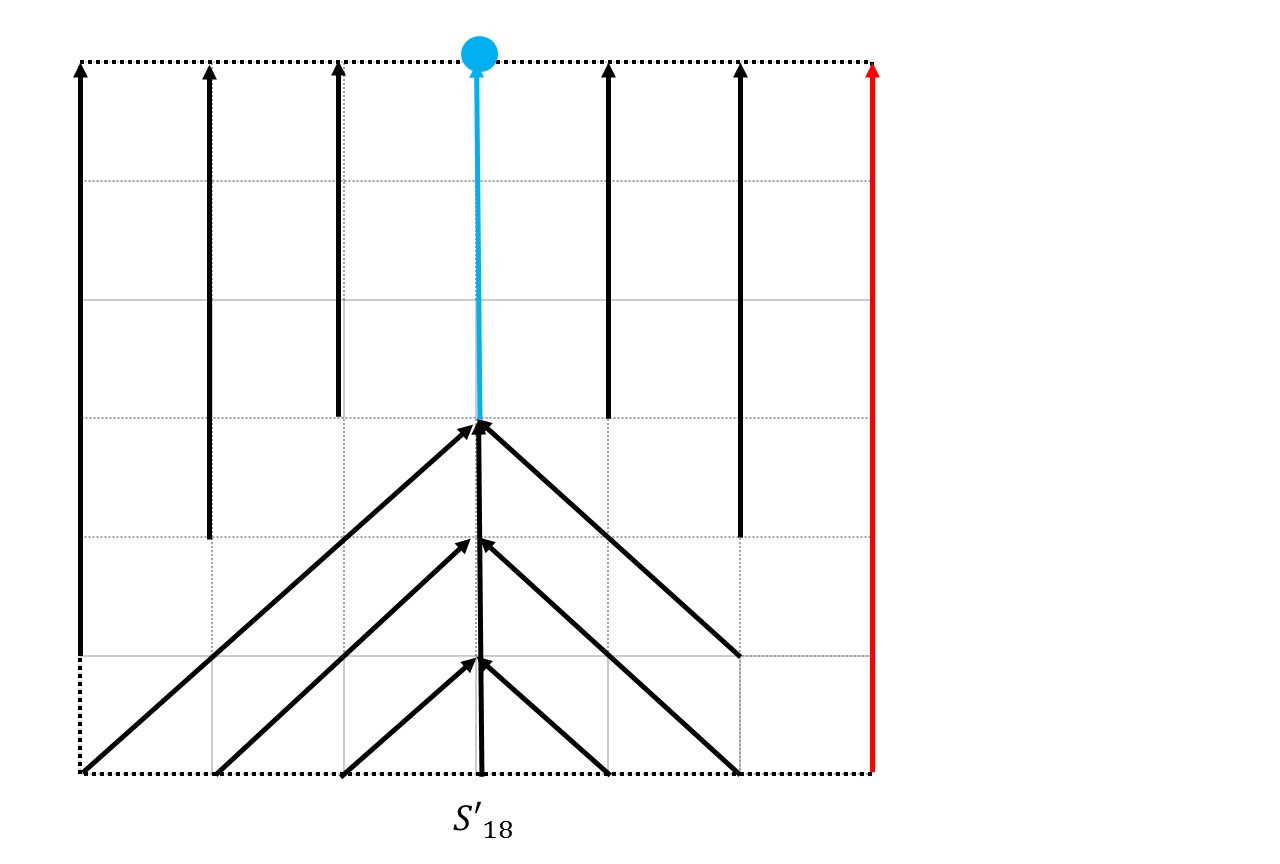}\label{fig:S18prime}}
\hspace{-1cm}
\subfigure[The overall modification]
{\includegraphics[height=4cm]{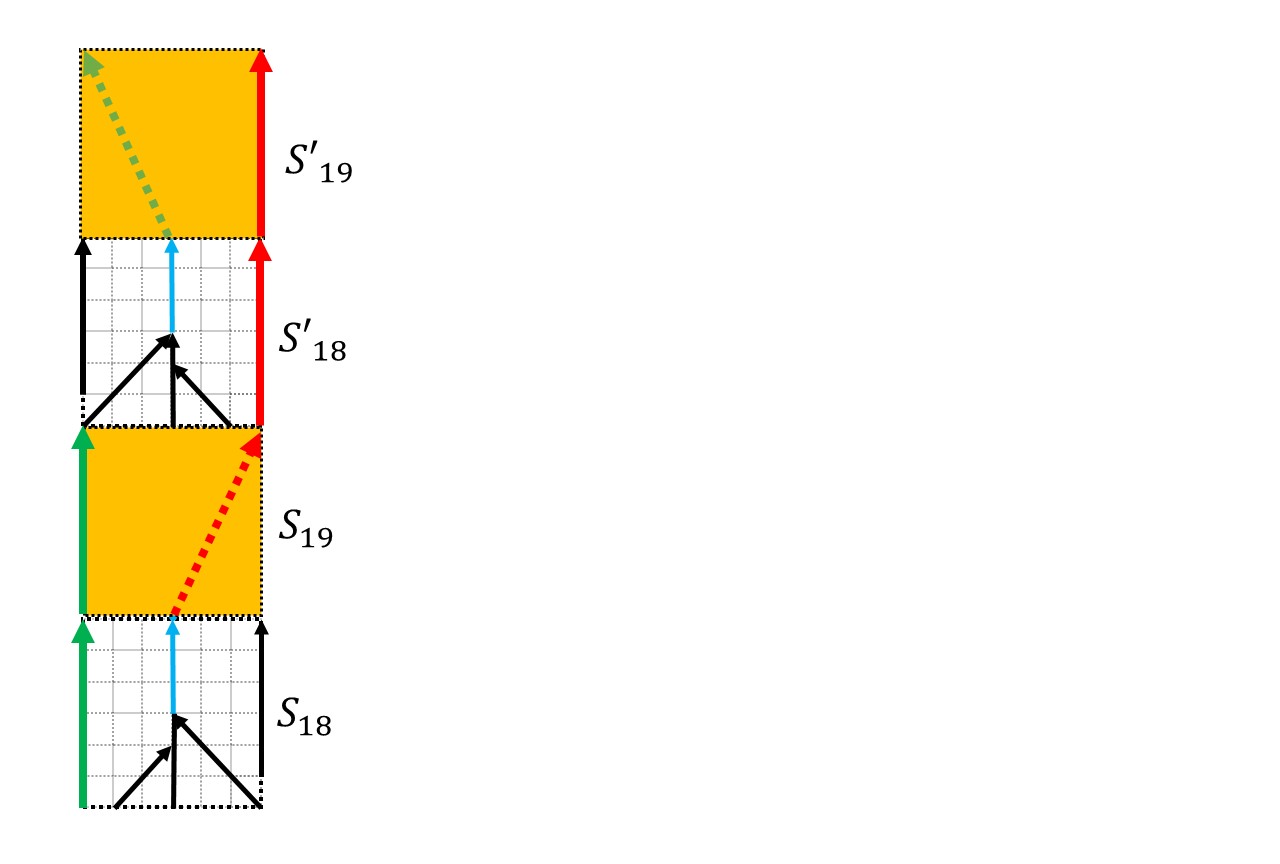}\label{fig:modified}}

\caption{The modifications for {\sc Point on Discrete Limit Cycle}}
\end{figure}

\begin{proof}
Suppose that we have an algorithm $A$ which, given any circuit $C$ that computes a displacement function on $T_n$, identifies a point $A(C)=(i,j)\in T_n$ which lies on a limit cycle.  We shall use this algorithm to solve QBF.   
Given a QBF $\mathcal{I}$ we construct a circuit implementing the dynamical system associated with $\mathcal{I}$ in a way almost exactly the same as in the previous reduction, except that,
we create two slightly modified copies of $S_{18}$  ($S_{18}$ and $S'_{18}$ in Figure \ref{fig:S18new} and \ref{fig:S18prime} respectively) and two identical copies of $S_{19}$ ($S_{19}$ and $S'_{19}$).  One can compare these with the $S_{18}$ in the previous proof: the first one has the property that the bottom left corner is connected to the top left and the other has the property that the bottom right corner is connected to the top right corner. 
Together, $S_{18}, S_{19}, S_{18}', S_{19}'$ replace $S_{18}$ and $S_{19}$ in the previous proof after scaling them down vertically by a factor of $2$; see Figure \ref{fig:modified}. 

We then apply algorithm $A$ on the resulting discrete dynamical system.  Algorithm $A$ will return a point $P$ on a limit cycle. As before,  our reduction ensures that  limit cycle of this system is unique, and its nature is very concrete and depends crucially on the outcome of the question whether $\mathcal{I}$ is TRUE or FALSE: 
\begin{itemize}
\item In all squares of $T_n$ except for $S_{18},S_{18}',S_{19}, S_{19}'$ the limit cycle is on the outer boundary of $T_n$ if the outcome is TRUE, and the inner boundary otherwise.
\item If the returned point is in one of $S_{18},S_{19},$ then we check if it is on the left boundary in which case the outcome is TRUE else FALSE. 
\item If the returned point is in one of $S_{18}',S_{19}',$ then we check if it is on the right boundary in which case the outcome is FALSE else TRUE. 
\end{itemize}
\end{proof}

\section{$\eps$-Cycles and the Approximate Poincar\'e-Bendixson Theorem}\label{sec:approximate-cycle}

In this section we  prove that, with  the right notion of an approximate cycle,  a version of the Poincar\'e-Bendixson theorem holds in arbitrary dimensions:  in the absence of approximate fixpoints, certain orbits come very close to forming a cycle 
(we say that $x\in T$ is an {\em $\epsilon$-fixpoint} if $\|f(x)\|<\epsilon$). 

\begin{definition}[$\eps$-cycle]
Let $\epsilon>0$, and let $\dot x = f(x)$ be a dynamical system $\dot x = f(x)$ on a compact subset $T$ of $\mathbb{R}^d$ where $f$ is $L$-Lipschitz continuous.    We say that a trajectory starting at $x(0)$ is an  {\em $\epsilon$-cycle} if for some $t>0,$ $x(t)$ is in the $(d-1)$-dimensional ball of radius $\eps$ centered at $x(0)$ and is orthogonal to $f(x(0))$, and $\langle f(x(0)),f(x(t))\rangle>0$.  
\end{definition}

\noindent
To understand the last requirement, compare the two parts of Figure \ref{fig:elc}. We would like to call the first one an approximate limit cycle, and not the second one.  As we shall see, this requirement is {\em redundant} when $\epsilon<\|f(x(0)\|/L$, because of the Lipschitz condition.

\begin{figure}[h]
\subfigure[$\eps$-cycle]{\includegraphics[height=4.8cm]{elc.jpg}} 
\hspace{1cm}
\subfigure[Not an $\eps$-cycle]{\includegraphics[height=4.8cm]{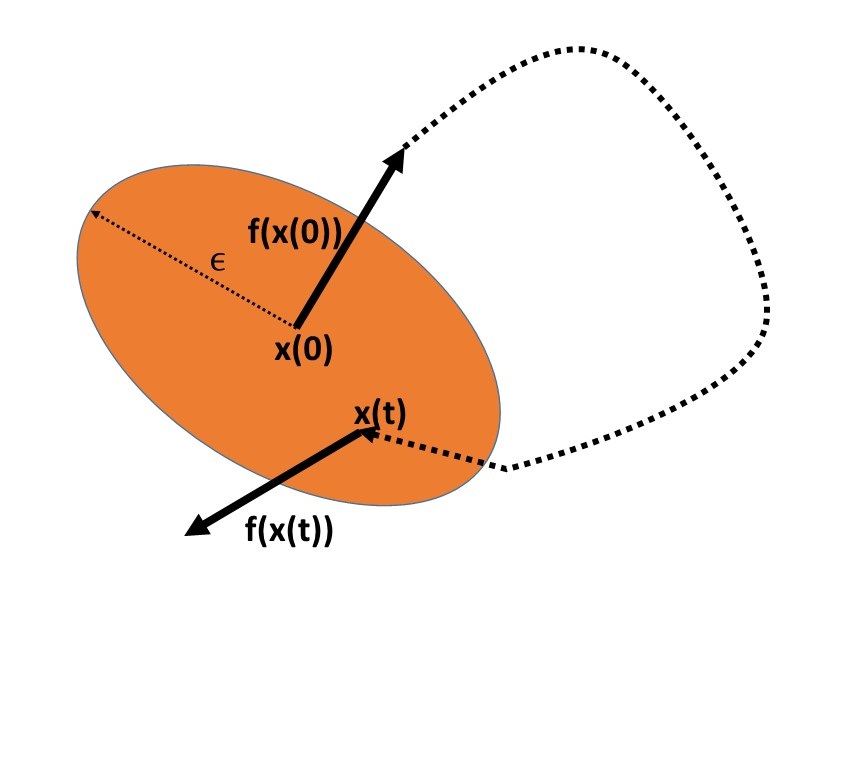}} 
\caption{$\eps$-Cycle}\label{fig:elc}
\end{figure}

 \begin{theorem}[Approximate Poincar\'e - Bendixson Theorem]\label{thm:approx}
For any  sufficiently small $\epsilon>0$, any dynamical system $\dot x = f(x)$ on $[0,1]^d,$ $d\geq 2$ with $f$  $L$-Lipschitz continuous, either has an $\epsilon$-fixpoint, or an $\nfrac{\epsilon}{ L}$-cycle. Further, the length of the $\nfrac{\eps}{ L}$-cycle is $O\left(\nfrac{L}{\eps} \right)^d.$
\end{theorem}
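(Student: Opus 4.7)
The argument will mirror the classical Poincar\'e--Bendixson proof but replace the planar monotonicity step with a volume-packing estimate valid in every dimension. Since $f$ has no $\epsilon$-fixpoint, $\|f(x)\|\ge\epsilon$ on $[0,1]^d$. Set $r:=\epsilon/(3L)$. For each $y$ on the orbit of $x(0)$, let $H_y$ be the affine hyperplane through $y$ perpendicular to $f(y)$ and $\Sigma_y\subset H_y$ the closed $(d-1)$-ball of radius $r$ centred at $y$. The Lipschitz bound gives $\|f(z)-f(y)\|\le 2Lr=\tfrac{2\epsilon}{3}$ whenever $\|z-y\|\le 2r$, and hence $\langle f(z),f(y)\rangle\ge\|f(y)\|^2-\tfrac{2\epsilon}{3}\|f(y)\|\ge\epsilon^2/3>0$. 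Thus the flow crosses every $\Sigma_y$ transversely in the positive $f(y)$-direction, and the inner-product requirement in the definition of an $\epsilon$-cycle becomes automatic for any two orbit points within $2r$ of each other.

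\emph{Step 1 (tube volume).} Form the tube $\mathcal T_T:=\bigcup_{t\in[0,T]}\Sigma_{x(t)}$ and parameterize it by $\Psi(t,v):=x(t)+\iota_{H_{x(t)}}(v)$ with $t\in[0,T]$ and $v$ in the $(d-1)$-ball of radius $r$. A short flow-box calculation shows $|\det D\Psi(t,0)|=\|f(x(t))\|$, and the same estimate holds throughout $\Sigma_{x(t)}$ up to a dimensional constant by the Lipschitz hypothesis. As long as $\Psi$ is injective,
\[
\mathrm{vol}(\mathcal T_T)\;\ge\;c_{d-1}\,r^{d-1}\!\int_0^T\!\|f(x(t))\|\,dt\;\ge\;c_{d-1}\,r^{d-1}\,\epsilon\,T,
\]
where $c_{d-1}$ is the volume of the unit $(d-1)$-ball. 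Because $\mathcal T_T\subset[0,1]^d+B(0,r)$ has $d$-volume $O(1)$, injectivity must fail once $T\gtrsim 1/(\epsilon\,r^{d-1})=O\bigl((L/\epsilon)^d\bigr)$, which simultaneously delivers the length bound stated in the theorem.

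\emph{Step 2 (extracting an honest $r$-cycle).} At the first time $T_\star$ at which $\Psi$ fails to be injective, pick $0\le t_1<t_2\le T_\star$ and $z\in\Sigma_{x(t_1)}\cap\Sigma_{x(t_2)}$; local embeddedness of flow boxes (which follows from the Setup's transversality) keeps $t_2-t_1$ bounded away from $0$, and $\|x(t_1)-x(t_2)\|\le 2r$. To convert this tube-level collision into a genuine return of the orbit to the transverse disk of $x(t_1)$, consider $\varphi(s):=\langle x(s)-x(t_1),\,f(x(t_1))\rangle$: we have $\varphi(t_1)=0$, $\varphi'(s)=\langle f(x(s)),f(x(t_1))\rangle>0$ while $\|x(s)-x(t_1)\|\le 2r$, and $|\varphi(t_2)|\le 2r\,\|f(x(t_1))\|$. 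The implicit-function theorem then yields a unique $s^*$ near $t_2$ with $x(s^*)\in H_{x(t_1)}$, and projecting the estimate $\|x(t_2)-x(t_1)\|\le 2r$ onto $H_{x(t_1)}$ bounds $\|x(s^*)-x(t_1)\|$ by $O(r)=O(\epsilon/L)$. Absorbing the constant into the choice of $r$ gives the $\epsilon/L$-cycle based at $x(t_1)$ claimed by the theorem, with the inner-product condition inherited directly from the Setup.

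\emph{Main obstacle.} The delicate part is Step 2: the volume estimate only produces a collision of tube fibres, not a crossing of the orbit itself through a transverse disk of an earlier orbit point. Making the passage rigorous rests entirely on the quantitative transversality $\langle f(z),f(y)\rangle>0$ throughout a $2r$-neighbourhood of $y$, and it is this slack that forces the factor $3$ in $r=\epsilon/(3L)$ and bounds how tight a cycle one can extract from the volume argument.
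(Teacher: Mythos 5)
Your proof follows essentially the same route as the paper's: sweep a transverse $(d-1)$-dimensional ball of radius $\epsilon/(3L)$ along the orbit, use the Lipschitz bound to make the inner-product condition automatic for nearby orbit points, and apply a volume-packing argument in $[0,1]^d$ to force two transverse disks to collide within time $O\left((L/\epsilon)^d\right)$, which yields both the cycle and the length bound. Your Step~2 simply makes explicit the disk-collision-to-orbit-return step that the paper dismisses as ``easy to see.''
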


\begin{proof}
Assume for the sake of contradiction that $f$ does not have an $\eps$-fixpoint, nor an $\nfrac{\eps}{ L}$-cycle.  Thus for all $x \in T,$ $\|f(x)\| \geq \eps.$ Let $B(x,\delta)$ denote the $d-1$-dimensional ball of radius $\delta$ centered at $x$ and orthogonal to $f(x).$  
Suppose we start with a point $x(0)$ and continue for time $T$ to reach $x(T)$.  In the absence of fixpoints and limit cycles, all $x(t)$'s on this orbit are distinct.   If there are times $0\leq t_1<t_2\leq T$ such that $B(x(t_1),\delta)$ and $B(x(t_2),\delta)$ intersect, where $\delta={\nfrac{\eps}{ 3L}}$, then we claim that the orbit from $x(t_1)$ to $x(t_2)$, is an $\nfrac{\epsilon}{L}$-cycle.  It is certainly the case that the two endpoints are within $\nfrac{2\eps}{3L}$ of each other, by the triangle inequality.  To see that $f(x(t_1))$ and $f(x(t_2))$ have positive inner product, suppose that they do not.  Then, since $\|f(x(t_1))\|,\|f(x(t_2))\| \geq \eps$, $\|f(x(t_1))-f(x(t_2))\| > \eps > L\|x(t_1)-x(t_2)\|$, violating the Lipschitz condition.  Finally, it is easy to see that there is a point $x(t_2')\in B(x(t_1),\nfrac{\eps}{ L}$ within $O(\nfrac{\eps}{L})$ of $x(t_2)$.

Thus, for $f$ not to have an $\eps$-cycle, the $B(x(t),\delta)$'s are disjoint  for all $0 \leq t\leq T.$  
It follows that the volume of the body obtained by sweeping an $d-1$-dimensional ball of radius $\nfrac{\eps}{3L},$ normal to the trajectory $(x(t))_{t=0}^T$ is lower bounded by 
\begin{eqnarray*}
  (1-\eps) V_{d-1}(\nfrac{\eps}{3L}) \cdot \int_{0}^T \| \dot{x}(t)\| dt &=&  (1-\eps)  V_{d-1}(\nfrac{\eps}{3L}) \cdot  \int_{0}^T \| f(x(t))\| dt \\
  &\geq & (1-\eps) V_{d-1}(\nfrac{\eps}{3L}) \cdot \eps \cdot T \\ 
  &\gtrsim &(1-\eps)\frac{1}{\sqrt{(d-1)\pi}} \left(\sqrt{ \frac{2\pi e}{d-1}}\right)^{d-1}\left(\frac{\eps}{3L} \right)^{d-1} \cdot \eps  \cdot T.
 \end{eqnarray*}
 Here we have used several facts:
 \begin{enumerate}
 \item For every $y$ in the  $d-1$-dimensional ball of radius $\nfrac{\eps}{3L}$  centered at $x(t),$ $\|f(y(t)\| = \|\dot{y}(t)\| \geq (1-\eps) \|f(x(t)\|.$ This allows us to change the order of integration and obtain the first term. 
\item   The length of curve traced by $x(t)$ from time $0$ to $T$ is $\int_{0}^T \|\dot{x}(t)\| dt.$ 
\item $V_{d-1}(r),$  the volume of the $d-1$-dimensional ball of radius $r$ and it is 
 $ \frac{1}{\sqrt{(d-1)\pi}} \left(\sqrt{ \frac{2\pi e}{d-1}}\right)^{d-1}r^{d-1}$ up to a factor of $1+O(d^{-1})$ via the Stirling approximation.
\end{enumerate}
But the volume of this body obtained by sweeping has to be at most $(1+{\eps\over L})^d$ as it is contained in $[-{\eps\over 3L},1+{\eps\over 3L}]^d$ (recall that we have assumed our domain is the $d$-cube). Thus, starting at any $x(0)$, within 
 $T \lesssim {\sqrt{(d-1)\pi}} \left(\sqrt{ \nfrac{d-1}{2 \pi e}}\right)^{d-1}\left(\nfrac{(3L)}{\eps} \right)^{d-1} \cdot \nfrac{(1+\eps/L)^d}{(\eps (1-\eps))}$ 
 we obtain an $\eps$-cycle. This proves the theorem.  
\end{proof}

\medskip\noindent
Note that  this result does not require the differentiability of $f$ and, more importantly, holds for {\em all} dimensions.  

\section{The Complexity of Approximate Cycles}\label{sec:circuit}
How hard is it to actually {\em find} an $\eps$-cycle?  Here we show that the problem is intractable.  We first fix our computational model, which captures the intended generality of computation:
 {\em The arithmetic circuit model,} used, e.g., in \cite{DGP,CD,EY} in the study of fixpoint problems.  
 We assume that the functions that define the dynamical system $\dot x =f(x)$ are given by a circuit $C$ with $d$ input variables and $d$ output variables ($d$ is the dimension, and is assumed to be fixed).  The gates of the circuit are in the basis $\{+, -, *, \pos\}.$   Here $\pos(x) = 1$ if $x>0$ and $\pos(x) = 0$ otherwise
 \footnote{For concreteness and economy we state our results for this minimal basis; we see no clear obstacles in adding division and even arbitrary analytic functions to the basis.}. We also assume that the circuit has a number of rational constants as additional inputs, whose bit description is part of the circuit's size $|C|$. $C$ partitions the domain into exponentially many ``cells'' (regions in which the assignment of $0,1$ values to the $\pos$ variables is fixed) and encodes a polynomial in each of the cells.  We further assume, as is often assumed in work on arithmetic circuits, see, e.g.,~\cite{ShpilkaY}, that the degree of each such polynomial is $|C|^{O(1)}$.  This assumption, along with the $L$-Lipschitzness of the function computed by this circuit, implies via a simple interpolation argument (such as Lemma 3.4 in \cite{parusinski2013new}) that the coefficients of   the polynomial in each of the cells remains bounded by $|C|^{O(|C|)}L,$ which is at most an exponential in the input size.  This allows us to prove that the radius of convergence of the real analytic functions  is large enough for our method for establishing the upper bound to work. A quantitative version of  the Cauchy-Kowalevski Theorem states that for a $d$-dimensional system $\dot{x}=p(x)$ with the initial condition $x(0)=0$, one can express the solution $x(t)$ around $0$ as a power series whose radius of convergence depends on the properties of $p.$  For instance, if $p(x)=\sum_{\alpha}c_\alpha x^\alpha$ is a multivariate polynomial (which converges everywhere), then the radius of convergence of $x(t)$  is  $R=\max_{r>0} \frac{r}{2d \cdot \max_{\alpha} |c_\alpha|r^{|\alpha|}},$ see Lemma 4.2 and Theorem 4.3 in    \cite{Driver-CK}. Thus, if each $|c_\alpha|$ is at most $|C|^{O(|C|)}L$, then $R$ is, roughly, at least $\frac{1}{d|C|^{O(|C|)}L}.$ Of course, $0$ could be replaced by any point by shifting.

\begin{theorem}[Complexity of an $\epsilon$-cycle]\label{thm:eps-complexity}
Given  $\eps,L>0$ and  a dynamical system through an arithmetic circuit $C$ (as described above) that computes an $L$-Lipschitz continuously differentiable function,  finding an $\epsilon$-fixpoint or a point that lies on an $\nfrac{\eps}{L }$-cycle can  is {\bf PSPACE}-complete in two or more dimensions.
\end{theorem}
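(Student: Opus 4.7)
The plan is to establish both PSPACE membership and PSPACE-hardness; the lower bound follows by a careful ``smoothing'' of the discrete construction from Section~\ref{sec:discrete}, while the upper bound exploits the piecewise-polynomial structure of the circuit together with Cauchy--Kowalevski and the Existential Theory of the Reals.

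For membership in PSPACE, I would exploit that $C$ partitions $[0,1]^d$ into at most $2^{|C|}$ cells, each defined by a conjunction of polynomial sign conditions, and that on each cell $f$ agrees with a polynomial $p_{\text{cell}}$ of degree $|C|^{O(1)}$ whose coefficients are bounded by $|C|^{O(|C|)}L$ (as argued just before the theorem statement). Inside a cell, the Cauchy--Kowalevski bound quoted in the excerpt gives a real-analytic power series solution of $\dot x = p_{\text{cell}}(x)$ with radius of convergence at least $r \gtrsim 1/(d\,|C|^{O(|C|)}L)$. Truncating this series at $\text{poly}(|C|, \log(1/\eps))$ terms yields an exponentially accurate approximation to the orbit, whose coefficients are computable by polynomial-size arithmetic. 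By Theorem~\ref{thm:approx}, any $\eps/L$-cycle has length at most $(L/\eps)^{O(d)}$, so it crosses at most an exponential number of cells. A candidate cycle can therefore be described by a sequence of (cell, entry point) pairs of polynomial total bit-length, and the predicate ``this sequence is a valid orbit segment that closes up into an $\eps/L$-cycle'' reduces to a polynomial-size sentence in the first-order theory of the reals (saying that consecutive segments agree at the cell boundary, that the final point lies in the relevant normal ball around the initial point, and that the inner-product condition on $f$ holds). Since the Existential Theory of the Reals lies in PSPACE, and since we can nondeterministically guess the cell-sequence and verify in PSPACE (then derandomize by $\mathbf{NPSPACE}=\mathbf{PSPACE}$), both the search and decision versions are in PSPACE. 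Alternatively, the entire orbit can be tracked segment by segment in polynomial space, halting as soon as the $\eps/L$-cycle or $\eps$-fixpoint condition is certified.

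For PSPACE-hardness, I would reduce from \textsc{Point on Discrete Limit Cycle} (Theorem~\ref{thm:discrete-decision}). Given a Boolean circuit $C$ computing a non-crossing displacement $f_{\mathcal{I}}$ on the grid $T_n \subset [0,1]^2$, I will construct an arithmetic circuit $\tilde C$ computing a continuously differentiable, $L$-Lipschitz vector field $\tilde f$ on $[0,1]^2$ (and extend trivially to $d \geq 3$ by zero in the extra coordinates, handled by taking a thin slab) such that the $\eps/L$-cycles of $\tilde f$ are in explicit one-to-one correspondence with discrete cycles of $f_{\mathcal{I}}$. The idea is to replace each directed edge of the discrete graph by a narrow ``channel'' of constant vector field pointing in the edge direction, blending between channels by a smooth bump partition of unity. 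Concretely, around each grid point $(i,j)$ we place a small neighborhood in which $\tilde f$ smoothly interpolates from the constant field on the incoming edge to the constant field on the outgoing edge $(i,j) \to f_{\mathcal{I}}(i,j)$; the non-crossing condition on $f_{\mathcal{I}}$ ensures these local interpolations can be carried out without collision. Using polynomial (piecewise) bump functions computable by a polynomial-size arithmetic circuit with $\pos$ gates, the construction meets the degree bound assumed by the model. The unit magnitude of the channel fields combined with the non-vanishing smooth interpolation guarantees no $\eps$-fixpoint is introduced, provided $\eps$ is chosen smaller than the achieved lower bound on $\|\tilde f\|$. Points on the continuous orbit corresponding to a discrete cycle are genuine $\eps/L$-cycles; points lying on transient paths feed into that cycle after bounded time. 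Applying a PSPACE algorithm for the continuous problem to $\tilde C$ thus returns a point from which we can recover, in polynomial time, a grid point on the discrete cycle, completing the reduction.

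The main obstacle I expect is ensuring no \emph{spurious} $\eps/L$-cycles appear outside the intended cycle --- in particular, in the interpolation zones around vertices where channels meet. I would address this by making the channel widths sufficiently small relative to $\eps/L$ and insisting that on each interpolation patch the smoothed field has a well-defined ``forward'' direction aligned with the discrete flow, so that the inner-product/normal-hyperplane condition in the definition of an $\eps$-cycle can only be fulfilled by orbits that first traverse a full discrete cycle. A tightness analysis of the Lipschitz constant $L$, together with the bound $\eps < \|\tilde f\|/L$ that makes the positive inner product condition automatic, lets us calibrate $\eps$ and the bump scales simultaneously so that the equivalence between $\eps/L$-cycles of $\tilde f$ and cycles of $f_{\mathcal{I}}$ is exact.
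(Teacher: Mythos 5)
Your proposal follows essentially the same route as the paper's own proof sketch: for membership, cell-by-cell integration of the piecewise-polynomial field via Cauchy--Kowalevski with the stated radius-of-convergence bound, boundary-crossing times located through the existential theory of the reals, and the length bound $O(\nfrac{L}{\eps})^d$ from Theorem~\ref{thm:approx}; for hardness, a smoothing of the discrete reduction of Section~\ref{sec:discrete}. There is, however, one genuine flaw in your \emph{primary} formulation of the upper bound. You correctly observe that an $\nfrac{\eps}{L}$-cycle may cross exponentially many cells, yet then claim the candidate cycle ``can be described by a sequence of (cell, entry point) pairs of polynomial total bit-length'' and certified by a single polynomial-size ETR sentence. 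That description is exponentially long (exponentially many pairs, each needing $\mathrm{poly}(|C|)$ bits of precision), so the one-shot guess-and-verify does not go through, and Savitch's theorem does not rescue a witness that cannot even be written down in polynomial space. Only your fallback --- tracking the orbit segment by segment in reused polynomial space and testing the closing condition on the fly --- is viable, and that is exactly what the paper does (it also checks, via ETR and polynomial identity testing across cell boundaries, that the circuit really computes an $L$-Lipschitz continuously differentiable function, a step you omit).

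On the lower bound your construction differs only in implementation: you blend constant channel fields with bump-function partitions of unity around each grid point, whereas the paper writes down explicit constant-degree polynomial interpolants (e.g.\ $\dot{x}=6y^2-6y$ for a diagonal step and $\dot{x}=x(6y^2-6y)$ for the mixed case) on a fixed twelve-square-plus-annulus domain with circular flow $\dot{x}=y$, $\dot{y}=-x$ in the corners. The paper's choice makes it immediate that the resulting circuit has constant-degree cell polynomials and $\pos$ gates applied only to linear forms, i.e.\ that it lies in the circuit family used for the upper bound; your bump functions would need to be verified against those model restrictions. Your worry about spurious $\nfrac{\eps}{L}$-cycles is well placed, but the calibration runs in the opposite direction from how you phrase it: what must be controlled is that $\nfrac{\eps}{L}$ sits \emph{below} the minimum separation between distinct passes of the orbit (and between distinct transient orbits), which in the core square is exponentially small since the grid there has $2^n(1+n(n+1)/2)$ rows; shrinking the channel width relative to $\nfrac{\eps}{L}$ does not by itself prevent false near-returns. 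With that correction, and with the segment-by-segment version of the upper bound, your argument matches the paper's.
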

\begin{proofof}{Sketch} {\bf Upper bound.} We start at a point $x(0)$ and approximate its trajectory for $T=O\left(\nfrac{L}{\eps} \right)^d$ time until an $\eps$-cycle is formed as guaranteed by the proof of Theorem \ref{thm:approx}.  We proceed from cell to cell, solving approximately  the differential equations in each cell through the Cauchy-Kowalevski Theorem \cite{CK} to obtain the solution of $d$  analytic functions (see \cite{kawamura2010complexity} for a complete argument).  The solution is {\em truncated} at the cell boundaries.  That is, we compute, approximately, the smallest time $t$ at which the trajectory intersects one of the cell boundaries (this can be carried out in polynomial space through, for example, the existential theory of the reals ({\bf ETR}) \cite{ETRCanny}).  
The exponential bound on the coefficients of the polynomials implies that the radius of convergence of the power series that describes the solution trajectories is not less than one over an exponential. Further, if at some point $\|f(x)\| \leq \min\{\eps, 2^{-\mathrm{poly}(|C|)}\},$ then we can declare $x$ to be an $\eps$-fixpoint.  Thus, within polynomial space it is possible to achieve the exponential approximation needed in order for our overall approximate solution to always be within $\nfrac{\eps}{ 2}$ of the true solution throughout the simulation for exponential time $T$.  Identifying the two points $x(t_1)$ and $x(t_t)$ in the proof of Theorem  \ref{thm:approx} is also easy by reusing space. Finally, using {\bf ETR} we can check the promised Lipschitzness and continuous differentiability conditions at each cell. Further, by recovering the circuit that computes the polynomials in two adjoining cells and computing the circuits corresponding to their derivatives \cite{ShpilkaY}, we can check that they agree on the boundary curve (which is the zero set of the {\pos} gate whose sign separates the two cells- and is polynomial degree as well); this step can  use polynomial identity testing which is in {\bf PH}.  This completes our sketch of the {\bf PSPACE} upper bound.

{\bf Lower bound.} For the lower bounds, we show how to modify the reductions to the discrete problems given in Theorems \ref{thm:discrete-search} and \ref{thm:discrete-decision} so that they become reductions to the corresponding continuous problems.  In both cases we convert the discrete function $f_\phi$ from grid points to grid points in those reductions into a continuous function effecting the same flow qualitatively --- and hence the same limit cycles.  

Our domain $T\subseteq [0,1]^2$ is  similar to the discrete one, consisting of $12$ squares (3 squares in each region $s_1, s_3$ and $s_5$ and the 3 individual squares $s_7, s_8$ and $s_9$) joined by four annulus quadrants (see Figure \ref{fig:continuous}).  We first describe the circuit computing the $f(x,y)$ and $g(x,y)$ functions in the 2D dynamic system.  First, the circuit will contain a series of $O(n)$ $\pos$ operations (where $n$ is the number of variables in the original QBF instance) extracting from $(x,y)\in [0,1]^2$, through binary search, the grid square (or annulus quadrant) in which the point $(x,y)$ lies, plus its precise coordinates within this grid square (we shall henceforth call these local coordinates, by notation abuse, $(x,y)$).

The flow in the four annulus quadrants is a circle flow, with  equations $\dot{x} = y, \dot{y} = -x$ (or the three symmetric versions), see Figure \ref{fig:continuous}. 
For the twelve squares, we first note that there are three basic kinds of grid squares, depending on the flow from the two ``base'' grid points (where the flow reaches first):  either they both go U,  both go UL, or the one on the left goes U and the other UL (in addition to the symmetric cases (UL,U), (U, UR), etc.).  In the (U,U) case, the equations are, naturally enough, $\dot{x} = 0$ (we only define $\dot{x}$, since outside the annulus quadrants it is always the case that $\dot{y}=1$, or the three symmetric versions).
  In the (UL, UL) case, we use this function:  $\dot{x} = 6y^2-6y$, which simulates the UL diagonal in a smooth and continuously differentiable way (see Figure \ref{fig:continuous}(b)). 
Further, in the (U,UL) case, we interpolate between these two equations: $\dot{x}= x( 6y^2-6y)$.    The symmetric cases are treated the same way. Finally, since some of our grid ``squares'' are {\em rectangles} with unequal $x,y$ lengths, we can modify these differential equations by scaling the $x$ coordinate by the appropriate factor which is about $2^n.$ 
  \begin{figure}[h]
\caption{The continuous reduction}\label{fig:continuous}
\includegraphics[height=12cm,width =15cm]{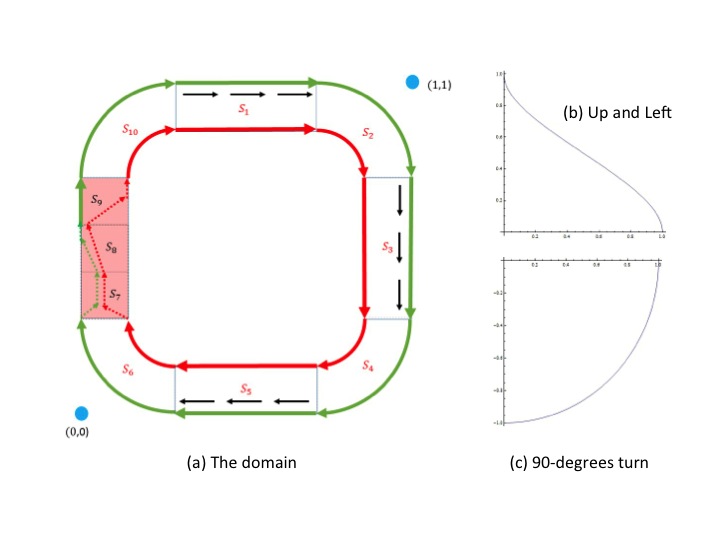}
\end{figure}

This completes the definition of the flow in the domain $T$.  It is easy to check that the function is continuously differentiable, has Lipschitz constant $O(2^n)$, and has the same limit cycle as its discrete counterpart.  {\bf PSPACE}-completeness follows. It is also easy to see that the function described above can be computed by an arithmetic circuit from the family used for the upper bound. In fact the polynomials in the construction just described have {\em constant} degree  and all {\pos} gates apply to only {\em linear functions} in the input variables. %
Finally, for three or more dimensions, the simulation of the multi-dimensional flow is done in an analogous way, whose details are omitted. 
\end{proofof}
As a corollary, if the hypothesis of the Poincar\'e--Bendixson Theorem (no fixpoint) holds, in polynomial space we can find cycles that come arbitrarily near closing.

\section{Discussion and Open Problems}
\label{sec:discussion}
We pointed out that the famous Poincar\'e--Bendixson Theorem for 2D dynamical systems has an interesting computational life: finding a limit cycle is intractable when recast in a discrete setting.   For the continuous problem we show an approximate version of the theorem:  trajectories that come $\epsilon$ near closing exist whenever the norm of the driving function is bounded below.  Finding such an approximate cycle is again {\bf PSPACE}-complete in the arithmetic circuit model.  Interestingly, unlike the Poincar\'e--Bendixson Theorem, the approximate version  holds in three or more dimensions.
\paragraph{\bf What is the complexity of approaching a true limit cycle if there are no fixpoints?}   This is the important problem left open here.  Our results trivially imply that it is no easier than {\bf PSPACE}, but can it be done in {\bf PSPACE}?  (The absence of fixpoints can be checked through {\bf ETR}).  Naturally, in the absence of fixpoints we can find in {\bf PSPACE} $\eps$-cycles for arbitrarily small $\eps>0$, but these may be far from the true limit cycle. One possible approach is this:  even though the approximate Poincar\'e--Bendixson Theorem holds in all dimensions, the 2D case is still special, because it provides clues about the true limit cycle:  {\em It lies inside the $\eps$-cycle.}  The trajectory between $x(t_1)$ and $x(t_2)$, together with the $x_1-x_2$ segment, divide the domain in two parts, of which ``inside'' is the one pointed to by $f(x(t_2))$, and a limit cycle is guaranteed to exist in there.  What is an appropriate notion of ``progress'' through which, if we continue finding $\eps$-cycles in this subdomain, we will eventually come close to the true limit cycle?  Alternatively, can the true limit cycle be somehow found through {\bf ETR}, by exploiting the cell structure of the domain and the polynomial nature of the driving functions?  As a caution to optimism here, we have also shown in the appendix that in the ``black box'' model of \cite{Ko}, in which a Turing machine computes bits of the result through oracle calls to polynomial-time Turing machines computing the driving functions, true limit cycles are uncomputable.  

\paragraph{\bf The origins of life question.}
Our original motivation for looking at the Poincar\'e-Bendixson theorem was the influential work of Eigen and Schuster \cite{eigen1979hypercycle} who considered the following dynamical system over the $n$-dimensional unit simplex, called the {\em elementary hypercycle}. 
$ \dot{x}_i = k_i x_ix_{(i-1) \bmod n} - x_i \sum_{j} k_j x_jx_{j-1}$
for $i \in \{ 1,\ldots,n\}.$ 
 Here $k_i>0$ for all $i.$  
This system has a fixpoint (easy to compute) but for $n \geq 5$ the fixpoint is unstable. Thus, restricting their attention to the simplex minus a small ball around this unstable fixpoint, Eigen and Schuster conjectured that for $n \geq 5$ there is always a stable limit cycle which lies strictly in the interior of the simplex. Remarkably, the existence of such a limit cycle was at the core of their arguments explaining the origins of life from the proverbial primordial soup. The existence of such a cycle was proved (using the two-dimensional Poincar\'e-Bendixson theorem)  \cite{HofbauerMS91}, Theorem \ref{thm:eps-complexity} implies that we can compute an $\eps$-cycle in {\bf PSPACE}. Can the limit cycle of this particular system  --- that is to say, Life! --- be approached in polynomial time?

\paragraph{\bf Acknowledgments.} Many thanks to Eric Allender, Paul Goldberg  and  Ankit Gupta for helpful discussions. In particular, Paul showed us an unpublished proof of his, which enabled us to improve our complexity lower bound from {\bf PP} to {\bf PSPACE}.

\bibliographystyle{plain}
\bibliography{Poincare}

\appendix
   
 \section{The model of \cite{Ko} and the Poincar\'e-Bendixson Impossibility Theorem}\label{sec:cts}
We prove that both the decision and the search problems (checking and finding points on the limit cycle) of the continuous version of the Poincar\'e-Bendixson theorem are arbitrarily hard even when the function $f$ is polynomial time computable. Since the problems we consider involve real numbers while standard complexity classes are defined with respect to strings, the right framework to study these problems  is that of the complexity of computation on real numbers \cite{Ko,KoF82}. In this setting,  the dynamical system $\dot x = f(x)$  is presented by an oracle Turing machine computing a Lipschitz continuous and differentiable function  $f(x)$. We first review this model before stating and proving our results.

\paragraph{\bf Computational complexity of real functions}
 In this model a real number is (possibly non-uniquely) represented as a sequence of {\em dyadic rational numbers} of the form
$$ s,a_k,\ldots a_0, a_{-1}, \ldots, a_{-m},$$ 
where $s \in \{+,-\}$ and $a_i \in \{0,1\}.$ $a_k=1$ unless $k=0.$  
Let $\mathbb{D}_m$ denote these set of strings.  Each string in $\mathbb{D}_m$ is, via an abuse of notation, associated to a rational number $s \sum_{i=-m}^{k} a_i2^i$ which is a multiple of $2^{-m}.$ A real number $x$ is represented by a function $\varphi $ if $| \varphi(0^m) -x | \leq 2^{-m}.$ (Here $0^m$ denotes the string of $0$s of length $m.$)
An oracle Turing machine computes a function $f:[0,1]\mapsto \mathbb{R}$ if, given any representation of $x \in [0,1]$ as an oracle, it computes some representation of $f(x).$ Such a machine is said to run in polynomial time (space) if for any $n,$ given an oracle access to a representation of $x,$ it outputs a number which is  within $2^{-n}$ of $f(x)$ and runs in time (space) $p(n)$ for some fixed polynomial $p(\cdot).$ Thus, the machine can never access $x$ to an accuracy more than $2^{-p(n)}.$ Thus, all computable functions in this model are continuous and all polynomial-space computable functions have polynomial modulus of continuity. This definition can be straightforwardly generalized to the case when $f$ has multiple inputs and multiple outputs: $f:[0,1]^k \mapsto \mathbb{R}^l.$ 
We now define what it means for a function to have arbitrarily high complexity and state the main result of this section.

\begin{definition}
A computational problem that takes as input an oracle Turing machine that computes a function from $[0,1]^k\mapsto \mathbb{R}^\ell$ and outputs a real number is said to have {\em arbitrarily high complexity} if for every function $K:\mathbb{Z}_+\mapsto \mathbb{Z}_+$ and every Turing machine $M$ there is an input oracle Turing machine for which $M$ cannot compute the output with precision $2^{-n}$ in fewer than $K(n)$ steps.   Similarly, a computational problem that takes as input an oracle Turing machine that computes a function from $[0,1]^k\mapsto \mathbb{R}^\ell$ and output a binary (``yes'' -- ``no'') is said to have {\em arbitrarily high complexity} if for every integer $K$ and every Turing machine $M$ there is an input oracle Turing machine for which $M$ cannot compute the correct answer in fewer than $K$ steps.
\end{definition}

\begin{theorem}[Poincar\'e-Bendixson Impossibility]\label{thm:cts-uncomputable}
The following two problems have arbitrarily high complexity:
\begin{enumerate}
\item Given an oracle access to a  polynomial time computable function $f$ from a two-dimensional compact domain $T$ to itself, which is known to be Lipschitz continuous and continuously differentiable and have no fixpoints, find a point guaranteed to be on a limit cycle of the dynamical system $\dot x = f(x)$.
\item Given an oracle access to polynomial time computable function $f$ from a two-dimensional compact domain $T$ to itself, which is known to be  Lipschitz continuous and continuously differentiable and have no fixpoints, and a point $x\in T$, determine if $x$ on (or $\epsilon$-close to for some fixed $\epsilon=\nfrac{1}{4}$) 
 a limit cycle of the dynamical system $\dot x = f(x)$.
\end{enumerate}
\end{theorem}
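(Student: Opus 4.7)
\begin{proofof}{Proposal}
The plan is to reduce from Ko's intractability theorem for root-finding of polynomial-time computable real functions. By the results of \cite{Ko,KoF82}, for any function $K:\Z_+\to\Z_+$ and any Turing machine $M$ one can construct a polynomial-time computable $C^1$ Lipschitz function $g:[\tfrac14,\tfrac34]\to\R$ with a single simple root $r^\star\in(\tfrac14,\tfrac34)$ such that $M$ cannot output $r^\star$ to precision $2^{-n}$ in fewer than $K(n)$ steps; moreover the harder bit-level variant of Ko's theorem gives instances for which even deciding whether $r^\star$ lies in a fixed half of the interval is arbitrarily hard. I will lift such an instance to a two-dimensional flow in which the limit cycle is a circle of radius $r^\star$.

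Given such a $g$, I would take $T\subset[0,1]^2$ to be the closed annulus $\{(x,y):\tfrac14\le\sqrt{x^2+y^2}\le\tfrac34\}$ (centered, after an affine shift, inside the unit square) and define $f$ in polar coordinates by $\dot\rho=g(\rho)$, $\dot\theta=1$. Converting to Cartesian gives
\[
f(x,y)\;=\;\bigl(g(\rho)\,x/\rho - y,\;\;g(\rho)\,y/\rho + x\bigr),\qquad \rho=\sqrt{x^2+y^2},
\]
which is polynomial-time computable from an oracle for $g$, is $C^1$ and Lipschitz on the annulus (the only potential singularity $\rho=0$ is excluded), and has \emph{no} fixpoints since $|f(x,y)|\ge\rho\ge\tfrac14$ throughout $T$. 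I would pick $g$ with $g(\tfrac14)>0>g(\tfrac34)$ and $g'(r^\star)<0$ (easy to arrange within Ko's construction) so that orbits remain in $T$ and the circle $\rho=r^\star$ is an attracting limit cycle; by uniqueness of the root of $g$ it is the only limit cycle, and, since $f$ has no fixpoints, every orbit converges to it by Poincar\'e--Bendixson.

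For the search problem, any point returned by an algorithm that purports to output a point on the limit cycle must have Euclidean norm exactly $r^\star$, so a $K(n)$-step algorithm for the search problem gives a $K(n)+O(n)$-step algorithm for computing $r^\star$ to $n$ bits, contradicting Ko's bound. For the decision problem with the fixed tolerance $\epsilon=\tfrac14$, I would use Ko's bit-hard instances where the single bit $b=[r^\star>\tfrac12]$ cannot be computed in fewer than $K$ steps, and query the fixed test point $P=(\tfrac12+\tfrac14,0)=(\tfrac34,0)$: this $P$ is $\tfrac14$-close to the cycle of radius $r^\star$ exactly when $r^\star\in[\tfrac12,1]\cap T=[\tfrac12,\tfrac34]$, i.e.\ precisely when $b=1$, so an algorithm for the decision problem solves the hard bit.

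The main obstacle I anticipate is purely technical: making sure the constructed $f$ is globally $C^1$ and Lipschitz (including, if one wants the system defined on all of $[0,1]^2$ rather than just the annulus, smoothly extending $f$ off $T$ without introducing new fixpoints or limit cycles). I would handle this by multiplying $g(\rho)$ by a fixed smooth cutoff supported in a slightly larger annulus and by keeping $\dot\theta=1$ throughout $[0,1]^2$; the rotational component guarantees no fixpoints, and the cutoff plus the analyticity of the Cartesian--polar transformation preserve polynomial-time computability and Lipschitzness. A second, more delicate point is that Ko's hardness is naturally stated for \emph{root computation} rather than for computing a bit; I will need to invoke (or sketch) the variant that produces instances where a single prescribed bit of $r^\star$ is hard, which follows by combining Ko's generic diagonalization with padding so that the hard information is placed at a chosen bit position.
\end{proofof}
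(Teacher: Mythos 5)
Your proposal is correct and follows essentially the same route as the paper: an annular domain, a flow $\dot\rho=g(\rho)$, $\dot\theta=1$ whose unique limit cycle is the circle through the unique root of a Ko-style hard-to-locate root function, with the constant rotational component ruling out fixpoints and the search problem reducing directly to root computation. The only substantive difference is that for the decision problem with fixed $\epsilon=\nfrac{1}{4}$ you give an explicit reduction (querying $(\nfrac{3}{4},0)$ against a bit-hard instance encoding $[r^\star>\nfrac{1}{2}]$), whereas the paper only proves the small-$\delta$ case by binary search and asserts the $\delta\geq\nfrac14$ case without detail -- so your sketch, modulo the cited bit-hardness variant of Ko's theorem, is if anything more complete on that point.
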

\noindent
 Both results follow from the well known fact that polynomially computable functions can have uncomputable roots\cite{Ko}; it is also easy to see that the Lipschitzness and continuous differentiability requirements do not affect its validity:
\begin{theorem}\label{thm:zero}
The following problems have arbitrarily high complexity:  given a continuous function $\phi:[0,1]\mapsto[0,1]$, find a root $x\in [0,1]$ or determine that none exists.  This remains true even if $\phi$ is monotone, continuously differentiable and Lipschitz continuous, and even if $\phi(0)>0$ and and $\phi(1)<0$ (and hence the root is unique).
\end{theorem}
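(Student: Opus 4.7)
The approach is to adapt Ko's classical diagonalization \cite{Ko} to the setting of monotone, $C^1$, Lipschitz functions with a guaranteed unique root. The definition of \emph{arbitrarily high complexity} allows us to hand-tailor the input function for each pair $(M,K)$, so it suffices to construct, given $(M,K)$, one polynomial-time computable $\phi:[0,1]\to[-1,1]$ (trivially affinely rescaled into $[0,1]$ if one insists on the codomain stated in the theorem) that is monotone decreasing, $C^1$, and $O(1)$-Lipschitz, with $\phi(0)=1>-1=\phi(1)$, whose unique root $y$ cannot be approximated by $M$ to precision $2^{-n^*}$ within $K(n^*)$ steps, for some scale $n^*=n^*(M,K)$ that we select.

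I would build $y$ and $\phi$ simultaneously. Pick $n^*$ sufficiently large. Write $y=\sum_{k\ge 1} b_k 2^{-k}$, setting $b_k$ to any default value (say $0$) for $k\ne n^*$, and defining the lone ``diagonal bit'' $b_{n^*}$ by finite simulation: run $M$ for $K(n^*)$ steps with oracle access to the polynomial-time machine computing $\phi$ described below, and pick $b_{n^*}$ to disagree with $M$'s $n^*$-th output bit. Because $b_{n^*}$ is a single constant once $(M,K)$ are fixed, the machine computing $\phi$ simply hard-codes it, so the expensive $K(n^*)$ simulation is a one-time precomputation, not a per-query cost. To define $\phi$, let $y_k=\sum_{i\le k} b_i 2^{-i}$ and build a sequence $\phi_k:[0,1]\to[-1,1]$ of $C^1$ monotone decreasing functions with $\phi_k(0)=1$, $\phi_k(1)=-1$, $\phi_k(y_k)=0$, uniformly bounded Lipschitz constant, and $\|\phi_k-\phi_{k+1}\|_\infty = O(2^{-k})$; a concrete realization is a three-segment piecewise-linear interpolation through $(0,1),(y_k,0),(1,-1)$ convolved with a $C^\infty$ bump of width $2^{-k-5}$. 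The uniform limit $\phi=\lim_k \phi_k$ is then $C^1$, monotone, $O(1)$-Lipschitz, and has unique root $y=\lim_k y_k$.

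Evaluating $\phi(x)$ at precision $2^{-n}$ reduces to computing $\phi_{n+c}(x)$ for a fixed constant $c$, which needs only the first $n+c$ hard-coded bits of $y$ and one oracle query to $x$ at precision $2^{-(n+c)}$; hence $\phi$ is polynomial-time computable. Conversely, any $M$ that approximates $y$ to precision $2^{-n^*}$ in $K(n^*)$ steps reads off $b_{n^*}$, contradicting its diagonal construction. The decision version (``is $x$ within $\nicefrac14$ of the root?'') follows by binary search: $O(n)$ calls to such a decider extract the root to precision $2^{-n}$, so the same $\phi$ defeats any $K$-time decider after enlarging $K$ by an $O(n)$ factor in the diagonalization.

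The main obstacle I anticipate is synchronizing the mollification across the sequence $\phi_k$ so that not only the $\phi_k$ but also the derivatives $\phi_k'$ converge uniformly, yielding genuine $C^1$ regularity with a uniform Lipschitz bound rather than just a continuous Lipschitz limit. This works out because $|y_{k+1}-y_k|\le 2^{-k-1}$ is of the same order as the mollifier width $2^{-k-5}$, so the derivative perturbation at level $k$ is supported on an interval of length $O(2^{-k})$ with amplitude $O(1)$; a careful choice of nested linear windows around $y_k$ and $y_{k+1}$ makes the updates local enough that the derivatives form a Cauchy sequence in $L^\infty$, while preserving monotonicity and the endpoint values throughout.
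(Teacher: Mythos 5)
There is a genuine gap, and it is the central idea of Ko's construction that your proposal is missing: the function must be made \emph{flat} near its root. Your family of functions --- the piecewise-linear interpolation through $(0,1),(y,0),(1,-1)$, mollified --- has derivative bounded away from zero (slope at most $-1$ in absolute value on each segment, for $y$ bounded away from the endpoints), so it satisfies $|\phi(x)|\geq c\,|x-y|$ for a universal constant $c$. For any such function the unique root \emph{is} polynomial-time computable by binary search: evaluate $\phi$ at the midpoint to precision $c\,2^{-n-3}$, read off the sign (or conclude $|m-y|\leq 2^{-n-1}$ and halt), and recurse. Hence a single fixed machine $M_{\mathrm{bs}}$ with a fixed polynomial time bound finds the root of \emph{every} member of your family, and no choice of the hard-coded bit $b_{n^*}$ can produce a counterexample against it. Relatedly, your diagonalization step is circular: you define $b_{n^*}$ by simulating $M$ with oracle access to ``the machine computing $\phi$ described below,'' but that machine depends on $b_{n^*}$. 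This is a fixed-point equation, not a definition, and for $M_{\mathrm{bs}}$ it has no solution --- $M_{\mathrm{bs}}$ outputs bit $0$ when the oracle encodes $b_{n^*}=0$ and bit $1$ when it encodes $b_{n^*}=1$. If instead you simulate $M$ against the ``default'' oracle and then flip the bit, $M$'s queries on the actual $\phi$ receive different answers and its output may change, so no contradiction is obtained.

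The repair (which is Ko's argument, and what the paper is implicitly invoking when it cites \cite{Ko}) is to hide the root information-theoretically rather than by diagonalizing on a single bit. Given $(M,K)$, build two (or $2^{n^*}$) monotone, $C^1$, $O(1)$-Lipschitz candidates $\phi_{y_0},\phi_{y_1}$ with $\phi(0)>0>\phi(1)$, whose roots $y_0,y_1$ differ by at least $2^{-n^*+1}$, but which agree to within $2^{-K(n^*)-1}$ in sup norm --- e.g.\ $\phi$ drops from $1$ to $2^{-K(n^*)-2}$ on $[0,\nfrac14]$, decreases by a total of $2^{-K(n^*)-1}$ across $[\nfrac14,\nfrac34]$ (crossing zero at an adjustable location), and drops to $-1$ on $[\nfrac34,1]$. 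Arrange the two oracle machines to return identical answers to every query of precision coarser than $2^{-K(n^*)}$ (round both to a common dyadic grid); since $M$ cannot write down a precision request finer than that within $K(n^*)$ steps, it behaves identically on both oracles, and its single output cannot be within $2^{-n^*}$ of both roots. Each candidate is still polynomial-time computable because $K(n^*)$ is a fixed constant once $(M,K)$ is fixed. The price of the flatness is exactly why binary search fails here: the sign of $\phi$ near the root is drowned out by any sub-exponential precision. Your mollification bookkeeping and the reduction from the decision version via binary search are fine, but they sit on top of a family of functions that is simply too steep to be hard.
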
 

\subsection{Proof of Theorem \ref{thm:cts-uncomputable}}
First we let the compact domain $T$ be the two-dimensional annulus with inner boundary of radius $1$ and outer boundary of radius $2.$ Further, we define our function $f$ in polar coordinates: $r, \theta.$ The radius $r \in [1,2]$ and $\theta \in [0,2\pi].$ The function $f(r,\theta)$ has two components: $(f_1(r,\theta),f_2(r,\theta))$ and the dynamical equations are $\dot{r}=f_1(r,\theta)$ and $\dot{\theta}=f_2(r,\theta).$ To prove our result, we let $f_1(r,\theta)=\phi(r-1)$ and $f_2(r,\theta)=1$ where $\phi$ is the function  as in Theorem \ref{thm:zero}. Thus, $f$ satisfies the conditions of the Poincar\'e-Bendixson theorem: $f$ is continuously differentiable, Lipschitz  and has no fixed point in $T.$ Further, it is obvious that there is exactly one limit cycle for this system which is the  circle concentric to the boundaries that passes through the unique root $\zeta$ of $\phi.$ 
Given an oracle for $f$, assume we can output two representations $(\tilde{r},\tilde{\theta})$ for a point on the limit cycle. Hence, $\tilde{r}-1$ must be a representation for $\zeta$. However, Theorem \ref{thm:zero} implies that $\phi^{-1}(0)$ has arbitrarily high time complexity in this model. Hence, finding a limit cycle for $f$ must also have arbitrarily high time complexity.  

The decision version of the theorem with arbitrarily small $\delta$ also follows straightforwardly by a simple binary search argument.  With a more complicated construction, the decision problem can be shown to be similarly intractable even if $\delta \geq 1/4$.

\end{document}